\newcommand{\ignore}[1]{}
\newcommand\ChangeRT[1]{\noalign{\hrule height #1}}
\newtheorem{theorem}{Theorem}
\begin{document}
% \fancyhead{}
\title{Approximated Doubly Robust Search Relevance Estimation}
% \author{Anonymous}
\author{Lixin Zou$^\dagger$, Changying Hao$^\dagger$, Hengyi Cai, Suqi Cheng, \\ Shuaiqiang Wang, Wenwen Ye, Zhicong Cheng, Simiu Gu, Dawei Yin{\small $^{\textrm{\Letter}}$}}
\affiliation{
\{zoulixin15, cyhaocn, hengyi1995, chengsuqi, shqiang.wang\}@gmail.com,\\ \{chengzhicong01, gusimiu\}@baidu.com, yindawei@acm.org \\
Baidu Inc., China}
\thanks{
$^\dagger$ Fair contribution. \\
$^{\textrm{\Letter}}$ Corresponding author. 
}
\renewcommand{\shortauthors}{Zou, et al.}

\begin{abstract}
Extracting query-document relevance from the sparse, biased clickthrough log is among the most fundamental tasks in the web search system. Prior art mainly learns a relevance judgment model with semantic features of the query and document and ignores directly counterfactual relevance evaluation from the clicking log. Though the learned semantic matching models can provide relevance signals for tail queries as long as the semantic feature is available. However, such a paradigm lacks the capability to introspectively adjust the biased relevance estimation whenever it conflicts with massive implicit user feedback. The counterfactual evaluation methods, on the contrary, ensure unbiased relevance estimation with sufficient click information. However, they suffer from the sparse or even missing clicks caused by the long-tailed query distribution. 

In this paper, we propose to unify the counterfactual evaluating and learning approaches for unbiased relevance estimation on search queries with various popularities. Specifically, we theoretically develop a doubly robust estimator with low bias and variance, which intentionally combines the benefits of existing relevance evaluating and learning approaches. We further instantiate the proposed unbiased relevance estimation framework in Baidu search, with comprehensive practical solutions designed regarding the data pipeline for click behavior tracking and online relevance estimation with an approximated deep neural network. Finally, we present extensive empirical evaluations to verify the effectiveness of our proposed framework, finding that it is robust in practice and manages to improve online ranking performance substantially.

% State-of-the-art methods for estimating relevance typically focus on 
% user click modeling approaches --- that extract unbiased and reliable relevance signals from biased clicks using inverse propensity reweighting mechanism 
% --- and semantic matching approaches --- that counterfactually learn a relevance judgment model by text matching between the query and web documents.
% User click modeling approaches are hindered by the sparse or even missing clicks and thus behave badly for tail queries.
% Conversely, semantic matching approaches, though enjoying superior generalization ability on tail queries, can lead to sub-optimal results, since it inherently can not adjust the biased search results without using historical user interactions.

% using only the query/document text, despite the fact that massive user behavioral data can be leveraged to de-bias search results effectively.
% Among them, estimating the query-document relevance based on the user feedback is appealing since the interaction data is indicative of users' direct judge of the search relevance and is virtually free for systems with active users.
% Although implicit user feedback has many advantages, its inherent biases (e.g., position bias, trust bias) are a key obstacle to its effective use.
% Previous work, including Inverse Propensity Weighting alike approaches, attempt to mitigate these behavioral bias using \textcolor{red}{[xxx]}.

% In this paper, we propose a novel semantic-aware relevance estimator,

\end{abstract}
%!TEX root = main.tex

%
% The code below should be generated by the tool at
% http://dl.acm.org/ccs.cfm
% Please copy and paste the code instead of the example below.
%

\begin{CCSXML}
<ccs2012>
<concept>
<concept_id>10002951.10003317.10003338.10003343</concept_id>
<concept_desc>Information systems~Learning to rank</concept_desc>
<concept_significance>500</concept_significance>
</concept>
</ccs2012>
\end{CCSXML}

\ccsdesc[500]{Information systems~Learning to rank}

\keywords{Doubly Robust, Search Relevance}
\maketitle

% \vspace{-3mm}
\section{Introduction}
Nowadays, search engines play an ever more crucial role in meeting users' information needs by locating relevant web pages from a prohibitively large corpus. Query-document relevance estimation, the core task in search result ranking, has been the most critical problem since the birth of web search. Numerous works have been proposed to extract the relevance signals given the query and a large corpus of documents~\citep{10.1145/2939672.2939677}, including direct text matching~\citep{10.1561/1500000019,manning2008introduction} and link analysis~\citep{Pageetal98}.

In the industrial setting, implicit feedback (e.g., clicks) usually acts as an attractive data source to conduct query-document relevance estimation, which exhibits detailed and valuable information about users' interactions and can be collected at virtually no additional costs.
Moreover, unlike explicit relevance judgments obtained from experts or crowd-sourcing, which can become obsolete quickly, implicit feedback reflects the time-varying preferences of the actual user population and is easy to maintain.

Unfortunately, data collected from user interactions in web search does not necessarily capture the true utility of each document for each query, owning to the presence of intrinsic click bias in user interactions.
For example, \textit{position bias} occurs since users are less likely to examine, and thus click, lower-ranked items~\citep{craswell2008experimental,luo2022model}, and \textit{trust bias} arises because users trust the ranking system and are more likely to click on documents at higher ranks that are not relevant~\citep{10.1145/3308558.3313697}. 
As a result, inferring the query-document relevance from user interactions becomes particularly challenging, especially when user-issued queries and clicked documents follow the long-tail distribution.

Recent work on unbiased relevance estimation with implicit user feedback can be broadly categorized into two groups. The first group counterfactually learns an unbiased relevance judgment model with the semantic feature~(e.g., TF-IDF, BM25~\citep{robertson2009probabilistic}). 
Typically, it achieves the goal through reweighting the ranking loss by treating examination as a counterfactual effect~\citep{DBLP:conf/sigir/AiBLGC18,DBLP:conf/www/HuWPL19}. 
It ensures the distribution-insensitive performance as long as the text content of the query and document is available, which is particularly beneficial for tail queries/documents. 
Remarkably, the neural relevance estimators fine-tuned from pre-trained language models (PLMs) using click behavioral data~\citep{10.1145/3447548.3467147}, establish the state-of-the-art ranking effectiveness, attributing to its superior generalization ability.
% However, solely relying on the semantic matching for relevance estimation can lead to its incapability on adjusting results when a sub-optimal result is revealed from the user's implicit feedback, i.e., the bias induced by the semantic matching model is irremovable even when we observed massive user behavior data. 
However, solely relying on the semantic matching model for relevance estimation can lead to sub-optimal results, since it inherently can not adjust the biased search results using only the query/document text, despite the fact that massive user behavioral data can be leveraged to de-bias search results effectively.
% its incapability on adjusting results when a sub-optimal result is revealed from the user's implicit feedback, i.e., the bias induced by the semantic matching model is irremovable even when we observed massive user behavior data. 
The second group, counterfactual relevance evaluation, focuses on extracting unbiased and reliable relevance signals from biased click feedback.
% One such effort is modeling the click with the relevance and position and extracting the relevance with the parameter estimation~\citep{chuklin2015click,chuklin2015click,DBLP:conf/www/ChapelleZ09,mao2018constructing,wang2013incorporating}.
One such effort is the click model~\citep{chuklin2015click}, which models the click with relevance and bias factors and extracts the relevance with parameter estimation techniques~\citep{DBLP:conf/www/ChapelleZ09,mao2018constructing,wang2013incorporating}.
% by re-weighting the clicks with the inverse bias-dependent expected clicks, i.e., {\it the clicks over expected clicks}~\citep{zhang2007comparing}.
For example, \citet{DBLP:conf/www/ChapelleZ09} propose a straightforward approach to re-weight the clicks with the inverse bias-dependent expected clicks, i.e., {\it the clicks over expected clicks}. 
However, for most tail queries, the click information is too sparse or missing entirely, which hinders the effectiveness of directly inferring the relevance from logging data~\citep{mao2019investigating}. 
In this work, we contend that it is beneficial to unify the counterfactual evaluation  and learning methods for unbiased relevance estimation on search queries with various popularities.
To make up the deficiencies of existing relevance estimation approaches,
% To make up the deficiencies of existing relevance estimation approaches and develop a method that obtains,
this work analyzes the properties of each fundamental relevance estimator and devises a novel way to intentionally combine their strengths, yielding a doubly robust relevance estimation framework.
% we present a theoretically principled and empirically effective approach for estimating unbiased and reliable relevance signals from implicit observational feedback that tackles the pitfalls mentioned above.
Specifically, we begin with a theoretical investigation of the Inverse Propensity Weighting (IPW)-based approach and analyze their capability to correct biased feedback regarding the bias and variance of the estimator.
Building on these insights, we introduce a PLM-based imputation model, working as a distribution-robust relevance estimator. 
By drawing on doubly robust estimation techniques~\citep{dudik2011doubly,pmlr-v97-wang19n,10.1145/3366423.3380037,kang2007ddr,saito2020doubly,yuan2020unbiased,DBLP:conf/sigir/GuoZLYCWCY021}, we then develop a provably unbiased estimator, combining benefits of the low-variance PLM-based imputation model and the low-bias IPW-based estimator, for relevance estimation using biased feedback data. 
We further instantiate the proposed unbiased relevance estimation framework in Baidu's commercial search engine, with comprehensive practical solutions designed regarding the data flow for click feature tracking and online inference by approximating the doubly robust relevance estimator with a deep neural network.
Finally, we present extensive empirical evaluations to verify the effectiveness of our proposed framework, finding that it is robust in practice and manages to improve online ranking performance substantially.
Our main contributions can be summarized as follows:
\begin{itemize}[leftmargin=*]
    \item We propose a doubly robust estimator for unbiased relevance estimation in web search, and theoretically analyze the bias and variance of the estimator.
    \item Based on the doubly robust estimator, we design an effective and sensible system workflow to field the proposed framework in a large-scale ranking system.
    \item We conduct extensive offline and online experiments to validate the effectiveness of the proposed framework. The results show that the proposed techniques significantly boost the search engine's performance.
\end{itemize}

% % \vspace{-2mm}
\section{Preliminary}
In this section, we first present the user behavior assumption used in this paper and formulate the task of counterfactual relevance estimation. We then review the conventional method for unbiased relevance estimation and discuss its strength and weakness.
% % \vspace{-2mm}
\subsection{User Behavior Model}
Suppose we have $K$ positions in a typical search engine, and for a query $q$, the document $d$ is displayed at position $k \in [1, K]$. 
Let $C$ be a binary random variable indicating whether the user clicks document $d$, $E$ whether the user examines the document $d$, and $R$ for the true relevance.

Following the \iffalse simple\fi generative \textbf{position-based model}~(PBM)~\cite{10.1145/3308558.3313697}, users click the document if and only if they examine the document and the document is relevant. 
Additionally, the examination probability only depends on the position $k$, but not on $q$ or $d$. 
Based on these assumptions, we have the following probability for a click
\begin{eqnarray*}
Pr(C=1|q,d,k) = Pr(E=1|k) \cdot Pr(R=1|q,d)
 = \theta_k \cdot \gamma_{q,d},
\end{eqnarray*}
where we use $\theta_k$ and $\gamma_{q,d}$ as short-hands for the corresponding probability. 

The PBM approach has a noise-free assumption that the perceived relevance, denoted as $\tilde{R}$, is the same as the true relevance.
However, this is not the case in real-world scenarios. Existing user eye-tracking study~\cite{DBLP:conf/sigir/JoachimsGPHG05} shows that users are more likely to trust, and thus click, the highly ranked results due to their trust in the effectiveness of the search engine to rank relevant documents higher.
As a result, a relevant document can be missed, and meanwhile a non-relevant document can be clicked,
% As a result, these highly ranked documents typically exhibit a higher click ratio, 
% despite the fact that they the documents are both examined, and their expected relevance is the same.
which reveals that a position-dependent \textbf{trust bias} in addition to the examination bias (position bias) is present in click data.
% due to which higher-ranked non-relevant documents may be clicked and lower-ranked relevant documents not clicked. 
In this paper, we model the trust bias as follows,
\begin{eqnarray*}
Pr(\tilde{R}|R=1,E=1,k) &= \epsilon^{+}, \\
Pr(\tilde{R}|R=0,E=1,k) &= \epsilon^{-},
\end{eqnarray*}
which indicates that a relevant document at position $k$ can be missed with probability 1-$\epsilon^{+}$, and a non-relevant document can be clicked mistakenly with probability $\epsilon^{-}$.
With the trust bias being considered, the clicking probability is reformulated as
\begin{eqnarray}
\nonumber
P(C=1 \mid d,q, k)&=&\theta_k\left(\epsilon_{k}^{+} \gamma_{q,d}+\epsilon_{k}^{-}\big(1 -\gamma_{q,d}\big)\right) \\ \label{equ_trust_bias}
                &=& \theta_k(\epsilon_{k}^{+} - \epsilon_{k}^{-})\gamma_{q,d} + \theta_k\epsilon_{k}^{-}.
\end{eqnarray}
For simplifying the notation, we denote $\alpha_k = \theta_k(\epsilon_{k}^{+} - \epsilon_{k}^{-})$ and $\beta_k = \theta_k\epsilon_{k}^{-}$, resulting in a compact notation for the click probability $P(C=1 \mid d,q, k) = \alpha_k\gamma_{q,d} + \beta_k$.

\subsection{Counterfactual Relevance Estimation}
Let $\mathcal{D}_{q,d} = \{(k_{i},c_{i})\}^D_{i=1}$ be a set of collected interaction data of a query-document pair $(q,d)$ over a period (e.g., the last day, last week or last month) with $c_{i}\in\{0,1\}$ indicating the observed clicking behavior. 
$D$ denotes the number of observed clicking data. 
Counterfactual relevance estimation concerns the approaches that estimate the relevance from the historical interactions as
\begin{eqnarray*}
\hat{\gamma}_{q,d} =\frac{1}{D}\sum_{(k,c)\in \mathcal{D}_{q,d}}\vartheta(k,c).
\end{eqnarray*}
Here, $\vartheta$ is a counterfactual estimator, aiming to minimize the difference between $\hat{\gamma}_{q,d}$ and $\gamma_{q,d}$ as 
\begin{eqnarray*}
\ell(\hat{\gamma}_{q,d}) &=& \mathbb{E}_{\mathcal{D}_{q,d}}\left[(\hat{\gamma}_{q,d} - \gamma_{q,d})^2\right]\\
&=& \left\{\text{Bias}_{\mathcal{D}_{q,d}}\left[\hat{\gamma}_{q,d}\right]\right\}^2 + \text{Var}_{\mathcal{D}_{q,d}}\left[\hat{\gamma}_{q,d}\right] + \sigma^2,
\end{eqnarray*}
where $\text{Bias}_{\mathcal{D}_{q,d}}\left[\hat{\gamma}_{q,d}\right]$ and
$\text{Var}_{\mathcal{D}_{q,d}}\left[\hat{\gamma}_{q,d}\right]$ are the bias and variance of $\hat{\gamma}_{q,d}$, and $\sigma^2$ is an irreducible error~\citep{domingos2000unified}. 
Therefore, the goal of counterfactual relevance estimation is equivalent to introducing the $\hat{\gamma}_{q,d}$ that minimizes its \textbf{bias} and \textbf{variance}.
% % \vspace{-2mm}
\section{Relevance Evaluation with Inverse Propensity Weighting}\label{sec_ipw}
A straightforward approach is to estimate $\gamma_{q,d}$ by re-weighting the click according to the examination probability $\theta_k$~\cite{DBLP:conf/www/ChapelleZ09}, leading to an IPW estimator as
\begin{eqnarray*}
\hat{\gamma}^{IPW}_{q,d} =\frac{1}{D}\sum_{(k,c)\in \mathcal{D}_{q,d}} \frac{c}{\hat{\theta}_{k}},
\end{eqnarray*}
% as the aggregated CTR (over all queries and sessions) in position
% \notice{Inverse Propensity Weighting~(IPW)~\cite{DBLP:conf/sigir/WangBMN16} is one of the first counterfactual learning methods for unbiased learning to rank~\cite{DBLP:conf/wsdm/JoachimsSS17}. 
% The inverse propensity weighting method} re-weights the ranking loss according to the examination probability $\theta_k$. To extract the relevance from the click, the IPW estimator re-weights the clicks as
where $\hat{\theta}_k$ is the estimated examination probability given the position $k$. 
With a properly specified $\hat{\theta}_k = \theta_k$, the $\hat{\gamma}^{IPW}_{q,d}$ would be an unbiased estimator under the position-biased model.

However, ~\citet{10.1145/3308558.3313697} prove that $\hat{\gamma}^{IPW}_{q,d}$ cannot correct for the trust bias.
Alternatively, they introduce an estimator based on affine corrections. This {\it affine} estimator penalizes an item displayed at rank $k$ by $\hat{\beta}_k$ while also re-weights the clicks inversely w.r.t. $\hat{\alpha}_k$ as
\begin{eqnarray*}
\hat{\gamma}^{aff}_{q,d} = \frac{1}{D}\sum_{(k,c)\in \mathcal{D}_{q,d}}\frac{1}{\hat{\alpha}_{k}} \left(c-\hat{\beta}_{k}\right),
\end{eqnarray*}
where $\hat{\alpha}_{k}$ and $\hat{\beta}_{k}$ are the estimated $\alpha_{k}$ and $\beta_{k}$ respectively.  

\textbf{Bias and Variance Analysis.}
We formulate the bias and variance of the affine estimator to analyze its strength and weakness. 
\begin{theorem}\label{th_ips_bias_variance}
Let $\Delta_{\alpha_k}$ and $\Delta_{\beta_k}$ be the simplified notation of $(\alpha_{k} - \hat{\alpha}_{k})$ and $(\beta_{k} - \hat{\beta}_{k})$ respectively.
Then the bias and variance of the $\hat{\gamma}^{aff}_{q,d}$ estimator are
\begin{eqnarray}
    \label{ipw_bias}
    \nonumber
            \text{Bias}_{\mathcal{D}_{q,d}}\left[\hat{\gamma}^{aff}_{q,d}\right] &=& \frac{1}{D}\sum_{(k,c)\in\mathcal{D}_{q,d}}\frac{\Delta_{\alpha_k}\gamma_{q,d} +\Delta_{\beta_k}}{\hat{\alpha}_{k}}, \\
    \label{ipw_var}
            \text{Var}_{\mathcal{D}_{q,d}}\left[\hat{\gamma}^{aff}_{q,d}\right] &=& \frac{1}{D}\sum_{(k,c)\in\mathcal{D}_{q,d}} \frac{\left(\hat{\alpha}_k\hat{\gamma}^{aff}_{q,d} + \hat{\beta}_k - c\right)^2}{\hat{\alpha}^2_k}.
    \end{eqnarray} 
\end{theorem}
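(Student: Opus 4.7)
The plan is to derive both formulas directly from the linearity of $\hat{\gamma}^{aff}_{q,d}$ in the clicks $c$ and from the click model of Equation~\ref{equ_trust_bias}, which gives $\mathbb{E}[c\mid k] = \alpha_k\gamma_{q,d} + \beta_k$. Since every coefficient in $\hat{\gamma}^{aff}_{q,d} = \frac{1}{D}\sum_{(k,c)\in\mathcal{D}_{q,d}}(c-\hat{\beta}_k)/\hat{\alpha}_k$ depends only on $k,\hat{\alpha}_k,\hat{\beta}_k$ and not on the random $c$, any expectation or empirical sample moment can be pushed through the $\frac{1}{D}\sum$ term by term, so the whole proof reduces to two short per-summand calculations.

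For the bias, I would begin from $\mathrm{Bias}_{\mathcal{D}_{q,d}}[\hat{\gamma}^{aff}_{q,d}] = \mathbb{E}_{\mathcal{D}_{q,d}}[\hat{\gamma}^{aff}_{q,d}] - \gamma_{q,d}$ and substitute the click model inside the sum, turning each summand into $(\alpha_k\gamma_{q,d} + \beta_k - \hat{\beta}_k)/\hat{\alpha}_k$. The key algebraic step is to absorb $-\gamma_{q,d}$ into each summand via $\gamma_{q,d} = \hat{\alpha}_k\gamma_{q,d}/\hat{\alpha}_k$, so the summand collapses to
\[
\frac{(\alpha_k - \hat{\alpha}_k)\gamma_{q,d} + (\beta_k - \hat{\beta}_k)}{\hat{\alpha}_k} \;=\; \frac{\Delta_{\alpha_k}\gamma_{q,d} + \Delta_{\beta_k}}{\hat{\alpha}_k},
\]
which, averaged over $\mathcal{D}_{q,d}$, is exactly the claimed bias expression.

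For the variance, I would read the stated formula as the plug-in (empirical) variance of the per-observation estimators $X_i := (c_i - \hat{\beta}_{k_i})/\hat{\alpha}_{k_i}$, whose sample mean is precisely $\hat{\gamma}^{aff}_{q,d}$. The derivation then reduces to the identity
\[
X_i - \hat{\gamma}^{aff}_{q,d} \;=\; \frac{c_i - \hat{\beta}_{k_i} - \hat{\alpha}_{k_i}\hat{\gamma}^{aff}_{q,d}}{\hat{\alpha}_{k_i}},
\]
so that squaring and averaging yields $\frac{1}{D}\sum (\hat{\alpha}_k\hat{\gamma}^{aff}_{q,d} + \hat{\beta}_k - c)^2/\hat{\alpha}_k^2$, exactly as claimed. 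The main obstacle is not computational but interpretive: I must fix the convention that $\mathrm{Var}_{\mathcal{D}_{q,d}}$ refers to this plug-in variance---obtained by substituting the estimated conditional mean $\hat{\alpha}_k\hat{\gamma}^{aff}_{q,d} + \hat{\beta}_k$ for the unknown $\alpha_k\gamma_{q,d} + \beta_k$---so that the final expression is reported entirely in terms of observed quantities. Once this convention is settled, both formulas drop out in one or two lines of algebra.
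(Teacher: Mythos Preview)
Your proposal is correct and follows essentially the same route as the paper's own proof: for the bias you substitute $\mathbb{E}[c\mid k]=\alpha_k\gamma_{q,d}+\beta_k$ and absorb $-\gamma_{q,d}$ into each summand, and for the variance you compute the empirical spread of the per-observation terms $(c-\hat{\beta}_k)/\hat{\alpha}_k$ around their sample mean $\hat{\gamma}^{aff}_{q,d}$, exactly as the paper does in its appendix. Your remark that the variance formula should be read as a plug-in (sample) variance is the same convention the paper silently adopts.
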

The full derivation is presented in Appendix~\ref{appendix_theory_1}. 

\paragraph{\bf Strength and Weakness of the Affine Estimator} 
In Theorem~\ref{th_ips_bias_variance}, the affine estimator is equal to the ideal relevance if $\Delta_{\alpha_k}=0$ and $\Delta_{\beta_k}=0\, \forall k\in [1,K]$. 
However, it has following limitations in practice: 
\textbf{(1)} the unbiasedness of affine estimator is hard to obtain since the $\alpha_k$ and $\beta_k$ are unknown in reality. 
Existing approaches estimate it with sophisticated computations, such as the MLE, EM algorithm~\citep{chuklin2015click,10.1145/3308558.3313697},
while accurately estimating its true value remains a formidable challenge;
% which might be different with its true values.  
% Existing approaches, even though estimating it with sophisticated computations~\citep{chuklin2015click,10.1145/3308558.3313697}, are struggle to which might be different with its true values.  
\textbf{(2)} the affine estimator is inaccurate for tail queries. As presented in Equation~\ref{ipw_var}, the variance of affine estimator is negatively correlated with the size of $\mathcal{D}_{q,d}$, indicating an unneglectable estimation error when it comes to the tail queries with sparse, or even missing clicking data.

\section{Learning Relevance with PLM-based Model}\label{sec_plm}
The inability to generalize the click information to the tail queries is the main barrier for improving relevance estimation using clicking data.
% Viewing this, we first introduce a PLM-based deep semantic matching model, which is able to capture the query-document relevance from the linguistic perspective and enjoys superior generalizations on tail queries.
An alternative for relevance estimation relies on the semantic matching between the query and document, which is able to capture the query-document relevance from the linguistic perspective and enjoys superior generalizations on tail queries.
With the recent significant progress of pre-training language models (PLMs) like BERT~\cite{devlin2018bert,hao2021sketch} and ERNIE~\cite{sun2020ernie} in many language understanding tasks, large-scale pre-trained models also demonstrate increasingly promising text ranking results, and neural rankers fine-tuned from pre-trained language models establish state-of-the-art ranking effectiveness~\cite{ma2021prop,chu2022h}. 

\textbf{Relevance Modeling with PLM.}
Following previous work~\cite{10.1145/3447548.3467147}, we employ an ERNIE 2.0\footnote{https://github.com/PaddlePaddle/ERNIE} based cross-encoder~\cite{10.1145/3447548.3467147} to capture the query-document relevance, in which deeply-contextualized representations of all possible input token pairs bridge the semantic gap between query and document terms.
% by using deeply-contextualized representations of all possible token pairs of the input query and document terms.
Specifically, we first concatenate the tokens of the query (q-tokens) and document (d-tokens) into a token sequence and then feed it into the ERNIE encoder to obtain the high-dimension representation as
\begin{equation} \label{encoder}
    \bm{h}_{\text{cls}} =\text{ERNIE([CLS]} \circ \text{q-tokens} \circ[\mathrm{SEP}] \circ \text{d-tokens} \circ[\mathrm{SEP}]),
\end{equation}
where $\circ$ denotes the concatenate operation, $\bm{h}_{\text{cls}}\in\mathbb{R}^{768}$ is the last layer's ``[CLS]'' token representation, and ``[SEP]'' denotes the special symbol to separate non-consecutive token sequences.
Particularly, a 12-layer encoder is employed for extracting the dense semantic representation. 
Given $\bm{h}_{\text{cls}}$, a fully-connected network is used as the scoring module to predict the imputed relevance score
\begin{align*} 
    \hat{\gamma}_{q,d}^{imp} &= \text{sigmoid}(\bm{w}^\top \cdot \bm{h}_{\text{cls}} + b),
\end{align*}
where $\text{sigmoid}(x) = {1}/({1+\exp({-x})})$ is the activation function. 
$\bm{w}\in\mathbb{R}^{768}$ and $b\in\mathbb{R}$ are trainable weight and bias parameter. 

\textbf{Fine-tuning with Randomization Data.}
% The PLM-based model is commonly fine-tuned with clean and high-quality data to achieve satisfactory performance. 
We fine-tune the PLM-based relevance estimator with clean and high-quality data. 
More concretely, we collect unbiased data $\mathcal{D}_{rand} = \{(q_{i},d_{i},c_{i})\}_{i=1}^N$, containing $N$ tuples of query, document and user's implicit feedback, by randomly presenting the query-document pair $(q,d)$ to online users in \textbf{the first ranking place and only the implicit feedback on top-1 document is recorded for training}. The concern of position-related biases is therefore eliminated since the top-1 result is most likely examined by users and all the documents are equally ranked in the first position. 
The PLM-based imputation model is then fine-tuned with $\mathcal{D}_{rand}$, by minimizing the following cross-entropy loss
\begin{eqnarray*}
\ell(\hat{\gamma}^{imp}) =  \sum_{(q,d,c)\in\mathcal{D}_{rand}}-c\log \hat{\gamma}_{q,d}^{imp}-(1-c)\log \left(1-\hat{\gamma}_{q,d}^{imp}\right).
\end{eqnarray*}

\textbf{\bf Strength and Weakness of the Imputation Model.} 
Given the text of query and document, the PLM-based imputation estimator is capable of providing relevance signals, even for those queries with sparse or even missing clicks. 
However, for the high-frequency queries, the PLM-based imputation model can not update its predicted relevance even when its estimated relevance violates with majority users' behaviors.
In other words, the imputation model is a low-variance but biased estimator that lacks the capability of introspectively being consistent with credible user feedback.

% This section describes relevance estimation using the PLM-based deep semantic matching model. After analyzing its advantage, we combine the advantages from both worlds, i.e., the affine estimator and the PLM-based method, and propose the doubly robust estimator for relevance estimation. 

\section{Unified Relevance Modeling with Doubly Robust Estimation}
Either evaluating the search relevance with the aforementioned affine estimator or learning it with the PLM-based imputation model can be considered band-aid solution: the former strategy excels at relevance estimation when abundant user clicks can be collected, however, it risks high-variance estimations for tail queries; whilst learning relevance merely with the PLM-based model is also suboptimal since the system can not introspectively adjust its predicted relevance even when the model results are conflict with the massive credible user feedback. This observation motivates us to develop a unified relevance estimation model, which can organically integrates the existing two schemes to collectively fulfill the entire goal. 
By exploiting the introduced PLM-based semantic matching model as an imputation model, we frame the PLM-based imputation estimator and the affine estimator into the doubly robust relevance estimation schema~\citep{dudik2011doubly,DBLP:conf/sigir/GuoZLYCWCY021}, taking the best of both worlds. 
% To enable the affine estimator and PLM-based imputation model benefit each other, we organically integrates these two schemes into the doubly robust (DR) estimator, to collectively fulfill the entire goal of unbiased relevance estimation.
% The introduced DR framework combines an imputation-based estimator (i.e., the PLM-based imputation model) and an affine estimator in a doubly robust way.

% To achieve the best of both affine estimator and imputation model, the doubly robust (DR) estimator~\cite{dudik2011doubly} is adopted for relevance estimation, which combines an imputation-based estimator (i.e., the PLM-based imputation model) and the affine estimator in a doubly robust way. 
Specifically, by leveraging a PLM-based imputation model, we estimate the ideal relevance between the query and document with $\gamma^{imp}_{q,d}$ estimator as
\begin{eqnarray} 
\nonumber
\hat{\gamma}_{q,d}^{dr} &=& \hat{\gamma}_{q,d}^{imp} + \frac{1}{D} \sum_{(k,c)\in\mathcal{D}_{q,d}} \frac{c-\hat{\beta}_{k} - \hat{e}
_{k} (\hat{\epsilon}^+_{k} -\hat{\epsilon}^-_{k})\hat{\gamma}_{q,d}^{imp}}{\hat{\alpha}_{k}}  \\
\label{equ_dr_final}
&=& \frac{1}{D} \sum_{(k,c)\in\mathcal{D}_{q,d}} \frac{\hat{\alpha}_{k}-\hat{e}
_{k} (\hat{\epsilon}^+_{k} - \hat{\epsilon}^-_{k})}{\hat{\alpha}_{k}}\hat{\gamma}_{q,d}^{imp} + \hat{\gamma}_{q,d}^{aff}.
\end{eqnarray}

% \begin{eqnarray} 
% \hat{\gamma}_{q,d}^{DR} &=& \hat{\gamma}_{q,d}^{IMP} + \frac{1}{D} \sum_{(k,c)\in\mathcal{D}_{q,d}} \frac{c - \hat{e}_{k} \hat{\gamma}_{q,d}^{IMP}}{\hat{\theta}_{k}}  
% \\
% &=& \hat{\gamma}_{q,d}^{IPW} + \left( 1 - \frac{1}{D} \sum_{(k,c)\in\mathcal{D}_{q,d}} \frac{\hat{e}_{k} }{\hat{\theta}_k} \right)\hat{\gamma}_{q,d}^{IMP} \\
% &=& \hat{\gamma}_{q,d}^{IPW} + w\times \hat{\gamma}_{q,d}^{IMP}
% \end{eqnarray}

Notably, we further incorporate an estimated click-correlated examination indicator $\hat{e}_{k}$, which can be approximated with user's behaviors, such as the click and dwelling time on the search result. \textbf{The detail is presented in \S~\ref{sec_exam}.}

\textbf{Bias and Variance Analysis.} 
We analyze the bias and variance of the doubly robust estimator to demonstrate its improvement over the existing approaches.

\begin{theorem}\label{th_dr_bias_variance}
Let $\Tilde{\alpha}_k$ be the simplified notation $\hat{e}_{k}(\hat{\epsilon}^+_{k} - \hat{\epsilon}^-_{k})$, $\Delta_{\Tilde{\alpha}_{k}}$ be short of $\Tilde{\alpha}_k - \hat{\alpha}_{k}$.
The bias and variance of the $\hat{\gamma}^{dr}_{q,d}$ estimator are
{\small
\begin{eqnarray*}
\text{Bias}_{\mathcal{D}_{q,d}}\left[\hat{\gamma}^{dr}_{q,d}\right] &=&\frac{1}{D} \sum_{(k,c)\in\mathcal{D}_{q,d}} \frac{\Delta_{\alpha_{k}}\gamma_{q,d}+ \Delta_{\beta_{k}}- \Delta_{\Tilde{\alpha}_{k}}\hat{\gamma}^{imp}_{q,d}}{\hat{\alpha}_{k}} \\
\text{Var}_{\mathcal{D}_{q,d}}\left[\hat{\gamma}^{dr}_{q,d}\right] &=& \frac{1}{D}\sum_{(k,c)\in\mathcal{D}_{q,d}}  \frac{(\Tilde{\alpha}_{k}\hat{\gamma}^{imp}_{q,d} + \hat{\beta}_{k} - c)^2}{\hat{\alpha}^2_{k}} + \delta_k,
\end{eqnarray*} 
}
where $\delta_k = \frac{\left(\hat{\gamma}^{dr}_{q,d} - \hat{\gamma}^{imp}_{q,d}\right)\left(\hat{\alpha}_{k}\hat{\gamma}^{dr}_{q,d} + (2\Tilde{\alpha}_{k}- \hat{\alpha}_{k})\hat{\gamma}^{imp}_{q,d} + 2\hat{\beta}_{k} - 2c\right)}{\hat{\alpha}_{k}} $. 
\end{theorem}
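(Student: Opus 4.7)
The plan is to mirror the derivation used for the affine estimator in Theorem~\ref{th_ips_bias_variance}, reusing its bookkeeping but carrying the additional imputation offset $\hat{\gamma}^{imp}_{q,d}$ through the algebra. First I write $\hat{\gamma}^{dr}_{q,d}$ as the empirical average $\frac{1}{D}\sum_i t_i$ of the per-sample summands
$$t_i \;=\; \hat{\gamma}^{imp}_{q,d} \;+\; \frac{c_i - \hat{\beta}_{k_i} - \tilde{\alpha}_{k_i}\,\hat{\gamma}^{imp}_{q,d}}{\hat{\alpha}_{k_i}},$$
with $\tilde{\alpha}_k = \hat{e}_k(\hat{\epsilon}^+_k - \hat{\epsilon}^-_k)$. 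Throughout, the plug-ins $\hat{\alpha}_k$, $\hat{\beta}_k$, $\tilde{\alpha}_k$, and $\hat{\gamma}^{imp}_{q,d}$ are treated as fixed, and the only randomness is in the click $c_i$ observed at position $k_i$.

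For the bias, I invoke the click model of Eq.~\ref{equ_trust_bias}, which gives $\mathbb{E}[c_i \mid k_i = k] = \alpha_k \gamma_{q,d} + \beta_k$. Substituting into $\mathbb{E}[\hat{\gamma}^{dr}_{q,d}] - \gamma_{q,d}$ and folding the constants $\hat{\gamma}^{imp}_{q,d}$ and $-\gamma_{q,d}$ into the same average over $(k,c)\in\mathcal{D}_{q,d}$ (each equals its own sum over $D$ samples), I place every term over the common denominator $\hat{\alpha}_k$. The numerator then reads $\hat{\alpha}_k \hat{\gamma}^{imp}_{q,d} + \alpha_k \gamma_{q,d} + \beta_k - \hat{\beta}_k - \tilde{\alpha}_k \hat{\gamma}^{imp}_{q,d} - \hat{\alpha}_k \gamma_{q,d}$, which regroups cleanly into $\Delta_{\alpha_k}\gamma_{q,d} + \Delta_{\beta_k} - \Delta_{\tilde{\alpha}_k}\hat{\gamma}^{imp}_{q,d}$ with the stated abbreviations.

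For the variance I adopt the same empirical-variance convention used in Theorem~\ref{th_ips_bias_variance}, namely $\text{Var}_{\mathcal{D}_{q,d}}[\hat{\gamma}^{dr}_{q,d}] = \frac{1}{D}\sum_i (t_i - \hat{\gamma}^{dr}_{q,d})^2$. Writing the shorthands $A = \hat{\gamma}^{dr}_{q,d} - \hat{\gamma}^{imp}_{q,d}$ and $u = c - \hat{\beta}_k - \tilde{\alpha}_k \hat{\gamma}^{imp}_{q,d}$, I obtain $t_i - \hat{\gamma}^{dr}_{q,d} = (u - \hat{\alpha}_k A)/\hat{\alpha}_k$, and hence
$$(t_i - \hat{\gamma}^{dr}_{q,d})^2 \;=\; \frac{u^2}{\hat{\alpha}_k^2} \;+\; \frac{A\,(\hat{\alpha}_k A - 2u)}{\hat{\alpha}_k}.$$
The first summand is exactly $(\tilde{\alpha}_k\hat{\gamma}^{imp}_{q,d} + \hat{\beta}_k - c)^2/\hat{\alpha}_k^2$. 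Back-substituting the expressions for $A$ and $u$ into the second summand and expanding $\hat{\alpha}_k A = \hat{\alpha}_k \hat{\gamma}^{dr}_{q,d} - \hat{\alpha}_k \hat{\gamma}^{imp}_{q,d}$ should collapse it to precisely $\delta_k$.

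The main obstacle is managing that cross-term in the variance calculation: one has to be careful that the factor $\hat{\alpha}_k A - 2u$, after substitution, rearranges to the asymmetric form $\hat{\alpha}_k\hat{\gamma}^{dr}_{q,d} + (2\tilde{\alpha}_k - \hat{\alpha}_k)\hat{\gamma}^{imp}_{q,d} + 2\hat{\beta}_k - 2c$ appearing inside $\delta_k$, with the $2\tilde{\alpha}_k - \hat{\alpha}_k$ coefficient on $\hat{\gamma}^{imp}_{q,d}$ arising from combining a $-\hat{\alpha}_k$ contribution from $A$ with a $+2\tilde{\alpha}_k$ contribution from $-2u$. Everything else is a routine replication of the affine-estimator argument, now with the imputation offset carried through linearly.
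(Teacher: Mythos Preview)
Your proposal is correct and follows essentially the same route as the paper's proof: for the bias you substitute $\mathbb{E}[c\mid k]=\alpha_k\gamma_{q,d}+\beta_k$ and absorb the constants $\hat{\gamma}^{imp}_{q,d}$ and $-\gamma_{q,d}$ into the per-sample fraction, and for the variance you compute the empirical variance of the summands around $\hat{\gamma}^{dr}_{q,d}$ and split the square into the $u^2/\hat{\alpha}_k^2$ piece plus the cross term. The paper's intermediate step $\bigl(\hat{\alpha}_k(\hat{\gamma}^{dr}_{q,d}-\hat{\gamma}^{imp}_{q,d}) + (\tilde{\alpha}_k\hat{\gamma}^{imp}_{q,d}+\hat{\beta}_k-c)\bigr)^2/\hat{\alpha}_k^2$ is exactly your $(u-\hat{\alpha}_k A)^2/\hat{\alpha}_k^2$, and your expansion of $\hat{\alpha}_k A - 2u$ into $\hat{\alpha}_k\hat{\gamma}^{dr}_{q,d}+(2\tilde{\alpha}_k-\hat{\alpha}_k)\hat{\gamma}^{imp}_{q,d}+2\hat{\beta}_k-2c$ recovers $\delta_k$ verbatim.
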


The full derivation of Theorem~\ref{th_dr_bias_variance} is presented in Appendix~\ref{appendix_theory_2}. 
If $\mathbb{E}_{\mathcal{D}_{q,d}}[\Tilde{\alpha}_k - \alpha_k]=0 $, the bias term can be rearranged as $\text{Bias}_{\mathcal{D}_{q,d}}\left[\hat{\gamma}^{dr}_{q,d}\right]=\frac{1}{D} \sum_{(k,c)\in\mathcal{D}_{q,d}} \frac{\Delta_{\alpha_{k}}(\gamma_{q,d}-\hat{\gamma}^{imp}_{q,d})+ \Delta_{\beta_{k}}}{\hat{\alpha}_{k}}$.
We see that if either $\Delta_{\alpha_{k}}=0$ or $(\gamma_{q,d}-\hat{\gamma}^{imp}_{q,d}) = 0$, the unbiasedness of $\hat{\gamma}^{dr}_{q,d}$ is guaranteed with $\Delta_{\beta_k}=0$. This property is called doubly robustness.

For the variance term, we empirically have $\delta_k<0$ with properly specified $\hat{e}_{k}$, $\hat{\alpha}_{k}$ and $\hat{\beta}_{k}$. Taking a single displayed query-document pair as an example. If $c=0$, then $(\hat{\gamma}^{dr}_{q,d} - \hat{\gamma}^{imp}_{q,d}) <0$ and $(\hat{\alpha}_{k}\hat{\gamma}^{dr}_{q,d} + (2\Tilde{\alpha}_{k}- \hat{\alpha}_{k})\hat{\gamma}^{imp}_{q,d} + 2\hat{\beta}_{k} - 2c) = (\Tilde{\alpha}_{k}\hat{\gamma}^{imp}_{q,d} +\hat{\beta}_{k} ) > 0$. 
Otherwise, if $c=1$, we can infer that  $(\hat{\gamma}^{dr}_{q,d} - \hat{\gamma}^{imp}_{q,d}) > 0$ and $(\hat{\alpha}_{k}\hat{\gamma}^{dr}_{q,d} + (2\Tilde{\alpha}_{k}- \hat{\alpha}_{k})\hat{\gamma}^{imp}_{q,d} + 2\hat{\beta}_{k} - 2c) = $ $(\Tilde{\alpha}_{k}\hat{\gamma}^{imp}_{q,d} +\hat{\beta}_{k} - 1)< 0$. Furthermore, with accurately estimated relevance $\gamma_{q,d}$, $(\Tilde{\alpha}_{k}\gamma_{q,d} + \hat{\beta}_{k} - c)^2$ is typically smaller than $(\hat{\alpha}_k\gamma_{q,d} + \hat{\beta}_k - c)^2$ in Equation~\ref{ipw_var} since $\hat{e}_{k}$ in $\Tilde{\alpha}_{k}$ is positively correlated with $c$ to some extent, refered to \S~\ref{sec_exam}.

\textbf{Advantage of the Doubly Robust Estimator.}
By inspecting the derived bias and variance, we contend that the advantage of the doubly robust estimator lies in three aspects:
\textbf{(1)} With correctly specified $\alpha_k, \beta_k$ or $\hat{\gamma}_{q,d}^{imp}$, the unbiased doubly robust estimator is acquired, demonstrating its robustness regarding the high-frequency queries. 
\textbf{(2)} The doubly robust estimator reduces the variance by incorporating an imputation model, which ensures its effectiveness on the tail queries. \textbf{(3)} The doubly robust estimator is flexible in updating the imputation model with any deep semantic matching model, eliminating the updating burden for re-training the PLM-based model.

% In Table~\ref{tab_demo_estimator}, we present an example of relevance estimation with a single displayed ranking. As shown in the demo, the affine estimator aggressively updates the relevance of clicking document to 1.82. As a contrary, the imputation model remains same after collecting the user's behavior, indicating the shortage of correcting the model with the user's impicit feedback. Combining these two methods, the doubly robust estimator appropriately revise its relevance estimating by considering user's feedback.

% \begin{table}[H]
% \centering
% \tabcolsep 0.05in
% \caption{An example of relevance estimation with a single displayed ranking.}
% \begin{tabular}{l|cccccc}
% \bottomrule
% Position $k$ & 1 & 2 & 3 & 4 & 5 & 6  \\ \hline
% $c_k$ & 0.0 & 0.0 & 0.0 & 1.0 & 0.0 & 0.0 \\ 
% $\hat{e}_k$ & 0.9 & 0.9 & 0.9 & 1.0 & 0.1 & 0.0 \\ \hline
% $\hat{\alpha}_k$ & 1.00 & 0.85 & 0.70 & 0.55 & 0.40 & 0.35 \\
% $\hat{\beta}_k$ & 0.00 & 0.00 & 0.00 & 0.00 & 0.00 & 0.00 \\ \hline
% Affine Estimator & 0.00 & 0.00 & 0.00 & 1.82 & 0.00 & 0.00 \\ 
% Imputation Model & 0.95 & 0.92 & 0.90 & 0.88 & 0.85 & 0.82 \\
% Doubly Robust Estimator & 0.10 & -0.05 & -0.25 & 1.09 & 0.63 & 0.82  \\
% \toprule
% \end{tabular}
% \label{tab_demo_estimator}
% \end{table}

\section{Approximated Relevance Estimation in Online System}
\begin{figure}[t]
\includegraphics[width=1.0\linewidth]{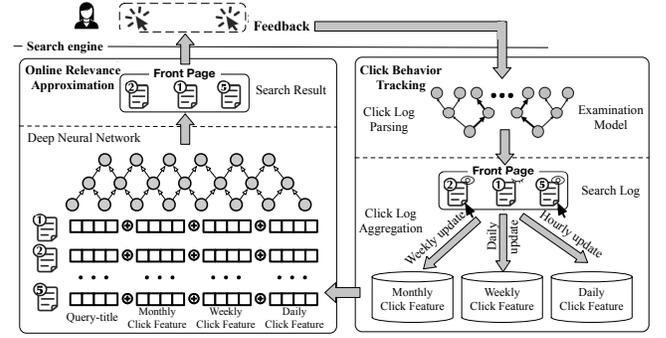}
% % \vspace{-5pt}
\caption{
The framework of the deployed doubly robust relevance estimator.
}
\label{fig_deployment_framework}
% % \vspace{-16pt}
\end{figure}
We instantiate our designed doubly robust relevance estimator in a production environment. Figure~\ref{fig_deployment_framework} depicts the overall workflows of the deployed framework. 
It consists of two parts: 
\textbf{(1)} \textbf{Click Behavior Tracking}, labeling examinations of every search result, and collecting the essential click information, e.g., display position, click, examination for every query-document pair;
\textbf{(2)} \textbf{Online Relevance Approximation}, calculating the ranking score with a deep neural network that approximates the doubly robust relevance estimator.

\subsection{Click Behavior Tracking}
Click behavior tracking involves an examination model to track user examination behaviors and update the click features for online inference. 

\begin{figure}[h]
\includegraphics[width=1.0\linewidth]{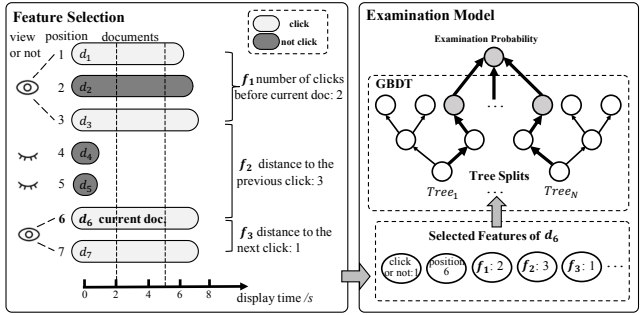}
% % \vspace{-5pt}
\caption{
The proposed examination model that estimates the examination based on the user's behavior log.
}
\label{exam_model}
% % \vspace{-17pt}
\end{figure}

\subsubsection{Examination Model}\label{sec_exam}
Faithfully perceiving user examination behaviors is critical for the success of doubly robust relevance estimator. 
However, it is hard to obtain unless tracking the user's eyeballs. 
In this work, as shown in Figure~\ref{exam_model}, we propose an examination model which estimates the examination based on the user's behavior log. 
The used features, training data, and the predictive model are specified as follows: 
\textbf{(1)~Feature Selection:} 
We extract a series of informative features for examination estimation, such as the ranking position,
being clicked or not, number of clicks before the current document, distance to the previous click, etc. These essential features can be leveraged to deduce user's examination behavior.
\textbf{(2) Training Data Mining:}  
To train the examination model, we mine the search log data and utilize the display time on the screen of terminal device $t_k$ as an indicator.  
The longer the document is left on the main page, the higher the probability it will be seen by users. Further, a user will never check the document without being displayed on the screen. 
Here, we set examples with $t_k > 5s$ as the positive samples and $t_k < 1s $ as the negatives.
Typically, $5s$ is the average time spent before clicking the top-1 result.
\textbf{(3) Examination Model:} 
Labeling user examinations is an hourly-updated task over all the logging data
since tracking the time-varying preferences of the actual user population is crucial for the counterfactual relevance estimation.
% due to its importance for the counterfactual relevance estimation. 
To reconcile its effectiveness and efficiency,
we employ a gradient boosting decision tree (GBDT)~\cite{ke2017lightgbm} as the backbone of the examination model. 
% Therefore, it is crucial to balance effectiveness and efficiency. GBDT is a widely acknowledged model that reconciles the performance and inference speed. 
Given the examination label, we train the examination model with a cross-entropy loss as
\begin{eqnarray*}
\ell_{\hat{e}_k} = -\mathbbm{1}[t_{k}>5]\log \hat{e}_k -\mathbbm{1}[t_{k}<1] \log (1-\hat{e}_k),
\end{eqnarray*}
where $\mathbbm{1}$ denotes the indicator function. 

\subsubsection{Click Behavior Tracking} 
We next introduce the feature tracking system, which records click features used in online inference. 
It is implemented with multiple regularly running map-reduce jobs on Baidu's distributed computing platform. 
As shown in Figure~\ref{fig_deployment_framework}, the offline feature tracking system consists of two stages: 
\textbf{1) Click Log Parsing}. 
In this stage, we parse the logging data of every search and obtain the click, display, displaying time, dwelling time, and page skipping information at every ranking position. 
Given the display information, the examination model is utilized to estimate the examination probability of every query-document pair. 
\textbf{2) Click Log Aggregation}. 
In the second stage, the historical users' behaviors are merged into three dictionaries that record the clicking behaviors over the month, week, and day, resulting with a dense clicking vector $\bm{x}\in \mathbb{R}^L$ containing $L$ click features. 
Particularly, the monthly dict updates every weekend and records the last four weeks' clicking behaviors. The weekly dict updates every day and records the last week's clicking behaviors. 
The daily dict updates every hour and records the last 24 hours clicking behaviors.

% An important fact for building such a ranking model is stably getting . 
% an  effective  retrieval  system  isthat, neither explicit text matching nor embedding-based semanticmatching can handle all the various kinds of queries. Instead, itis more promising to integrate the two types of retrieval methodstogether, to provide a better overall performance. Motivated bythis, we developed a novel system workflow, which is depicted inFigure 4. Compared with conventional text matching, the system isupgraded with two modules

\subsection{Online Relevance Approximation}
% For the simplicity of deployment and maintenance, 
To efficiently serve the proposed framework in an online search engine system,
we propose to approximate the doubly robust estimator with a neural network architecture, as shown in Figure~\ref{fig_dr_estimated}. 
Specifically, following the Equation~\ref{equ_dr_final}, we decouple the doubly robust estimator into three parts
\begin{eqnarray*}
\hat{\gamma}_{q,d}^{dr} = \zeta_{q,d}\hat{\gamma}_{q,d}^{imp} + \hat{\gamma}_{q,d}^{aff},
\end{eqnarray*}
where $\zeta_{q,d} = \frac{1}{D} \sum_{(k,c)\in\mathcal{D}_{q,d}} \frac{\hat{\alpha}_{k}-\hat{e}
_{k} (\hat{\epsilon}^+_{k} - \hat{\epsilon}^-_{k})}{\hat{\alpha}_{k}}$ % \in (-\infty,1)$
is a data-dependent trade-off coefficient that balances the estimation of $\hat{\gamma}_{q,d}^{imp}$ and $\hat{\gamma}_{q,d}^{aff}$. 
For $\hat{\gamma}_{q,d}^{imp}$, it is implemented as an ERNIE-based model, as described in \S~\ref{sec_plm}. 
The $\hat{\gamma}_{q,d}^{aff}$ is approximated with a neural network by aligning the historical clicking behavior with the unbiased randomization clicking data. 
Finally, the trade-off coefficient $\zeta_{q,d}$ is estimated through balancing the $\hat{\gamma}_{q,d}^{imp}$ and $\hat{\gamma}_{q,d}^{aff}$ on the unbiased randomization clicking data. 

\begin{figure}
\includegraphics[width=1.0\linewidth]{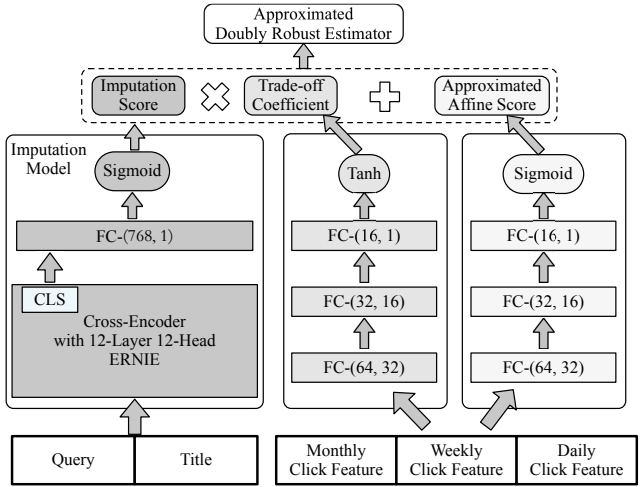}
% % \vspace{-5pt}
\caption{
The approximated doubly robust estimator that consists of three parts: the imputation model, the trade-off coefficient and the approximated affine estimator. 
}
\label{fig_dr_estimated}
% % \vspace{-12pt}
\end{figure}

\subsubsection{Approximated Affine Estimator}\label{sec_aprroxi_affine}
The affine estimator takes the recorded biased clicking log $\bm{x}_{q,d}$ as input and outputs an unbiased relevance estimation with hyper-parameters $\hat{\alpha}_{k}$ and $\hat{\beta}_{k}$. 
To this end, we directly align $\bm{x}_{q,d}$ to its relevance by training a 3-layer multilayer perceptron~(MLP) with the randomization data~$\mathcal{D}_{rand}$ as
\begin{eqnarray*}
\ell(\bar{\gamma}^{aff}) &=& \sum_{(q,d,c)\in\mathcal{D}_{rand}}-c\log \bar{\gamma}_{q,d}^{aff}-(1-c)\log \left(1-\bar{\gamma}_{q,d}^{aff}\right) \\
&s.t.& \bar{\gamma}_{q,d}^{aff} = \text{sigmiod}(\text{MLP}(\bm{x}_{q,d})),
\end{eqnarray*}
where $\bar{\gamma}^{aff}_{q,d}$ is the approximated affine estimator with the neural network, $\bm{x}_{q,d}\in\mathbb{R}^L$ is the click feature for query-document pair $(q,d)$.

\subsubsection{Trade-off Coefficient}
We blend the $\bar{\gamma}^{aff}_{q,d}$ and $\hat{\gamma}^{imp}_{q,d}$ with a trade-off coefficient, which incorporates the $\hat{\gamma}^{imp}_{q,d}$ by considering the historical logging data as
\begin{eqnarray*}
\bar{\gamma}_{q,d}^{dr} &=& \bar{\zeta}_{q,d}(\perp\hat{\gamma}_{q,d}^{imp}) + (\perp\bar{\gamma}_{q,d}^{aff}) \\
&s.t.& \bar{\zeta}_{q,d} = \tanh(\text{MLP}(\bm{x}_{q,d})).
\end{eqnarray*}
Here, $\perp$ is an operator that sets the gradient of the operand to zero\footnote{This operator $\perp$ has been implemented in scientific computing libraries, e.g., {\it stop gradient} in TensorFlow and {\it detach} in PyTorch.}.
%so that $\nabla_{\theta} \perp\left(f_{\theta}\left(\bm{x}\right)\right)=0$ and
% $\perp\left(f_{\theta}\left(\bm{x}\right)\right)=\left(f_{\theta}\left(\bm{x}\right)\right)$.
% As ${\zeta}_{q,d}$ can be negative, 
We approximate its value with a MLP and adopt the $\tanh(\cdot)$ as the activation function. Then, we optimize with the policy gradient algorithm~\citep{Williams:92} using unbiased data $\mathcal{D}_{rand}$ as
\begin{eqnarray}\label{equ_policy_gradient}
\ell(\bar{\zeta}) =   \frac{1}{|\mathcal{D}_{rand}|}\sum_{(q,d,c)\in\mathcal{D}_{rand}} - \hat{c} \log{\bar{\gamma}_{q,d}^{dr}},
\end{eqnarray}
where $\hat{c} = 2c-1$ converts the click into a reward for conducting optimization.

% Since the $A_{q,d} \in (-\infty, 1)$, we select the $\tanh(x) = \frac{e^{x}-e^{-x}}{e^{x}+e^{-x}}$ as the activation function. Furthermore, we optimize the $A_{q,d}$ with policy gradient rather than cross-entropy 
% \begin{eqnarray*}
% \ell(\bar{\gamma}^{aff}) &=& \sum_{(q,d,c)\in\mathcal{D}_{rand}}-c\log \bar{\gamma}_{q,d}^{aff}-(1-c)\log \left(1-\bar{\gamma}_{q,d}^{aff}\right) \\
% \bar{\gamma}_{q,d}^{aff} &=& \text{sigmiod}(\text{MLP}(\bm{x}_{q,d})),
% \end{eqnarray*}

% Data: Randomization-based Top-1 display data. 

% Activation: $\tanh$

% Loss: policy gradient.

% \begin{eqnarray}
% \ell(\hat{R}_{dr}) = - \hat{c}_1 * \log (\hat{R}_{dr}+1) 
% \end{eqnarray}

% $\hat{c}_1 = 2 * c_1 - 1$.

% Feature: 
% the display count, 
% exam count, 
% skip count, 
% click count, 
% dwelling time and final click of query documents. 

\section{Experiments}
To assess the effectiveness of the proposed solutions, we conduct extensive offline and online experiments on a large-scale real-world search system. 
This section details the experimental setup and presents several insights demonstrating that the proposed approaches are effective for online ranking in a commercial search engine system.

\subsection{Dataset}
We train and evaluate our proposed method with both logged user behavioral data~(\textbf{log}), randomization data~(\textbf{rand}) and manually-labeled data~(\textbf{manual}). The \textbf{log} and \textbf{rand} are collected for training model, and \textbf{manual} is adopted as the evaluation set. Specifically, we randomly sample billions of query-document pairs from a week of users' accessing logs from Sep. 2021. The \textbf{rand} data is collected from Sep. 2021 to Nov. 2021, which consists of 10 millions query-document pairs. 
In the manually-labeled data, $400,000$ query-document pairs for $8,000$ queries are annotated for off-line evaluation. Table~\ref{tab:data_statistics} offers the dataset statistics.

\subsection{Evaluation Methodology}
We employ the following evaluation metrics to assess the performance of the ranking system.

% The \textbf{Positive-Negative Ratio} (PNR) is a pairwise metric for evaluating the search relevance performance. 
% It has been extensively used in the industry due to its simplicity. 
% For a ranked list of $N$ documents, the PNR is defined as the number of concordant pairs versus the number of discordant pairs:
% \begin{eqnarray}
%     PNR =  \frac{\sum_{i,j\in [1,N]} \mathbbm{1}\{y_i> y_j\}\cdot \mathbbm{1}\{f(q,d_i) > f(q,d_j)\}}{
%     \sum_{m,n\in [1,N]} \mathbbm{1}\{y_m > y_n\}\cdot \mathbbm{1}\{f(q,d_m) < f(q,d_n)\}},
% \end{eqnarray}
% where the indicator function $\mathbbm{1}\{x > y\}$ takes the value $1$ if $x>y$ and $0$ otherwise. 
% We use the symbol $PNR$ to indicate this value's average over a set of test queries in our experiments.

The \textbf{Discounted Cumulative Gain} (DCG)~\citep{Jrvelin2017IREM} is a standard listwise accuracy metric and is widely adopted in the context of ad-hoc retrieval. 
For a ranked list of $N$ documents, we use the following implementation of DCG
\begin{eqnarray*}
    DCG_N = \sum_{i=1}^N \frac{G_i}{\log_2(i+1)},
\end{eqnarray*}
where $G_i$ represents the weight assigned to the document's label at position $i$.
Higher degree of relevance corresponds to a higher weight.
We use the symbol $DCG$ to indicate the average value of this metric over the test queries. 
$DCG$ will be reported only when absolute relevance judgments are available. 
% In the following sections, we will report $DCG_2$, $DCG_4$ with $N \in \{2,4,10,20,30\}$, respectively. 
In online experiments, we extract $6,000$ queries and manually label the top-4 ranking results generated by the search engine for calculating $DCG$.  

The \textbf{Expected Reciprocal Rank}~(ERR)~\cite{ghanbari2019err} considers the importance of the document at a position to be dependent on the documents ranked higher than this document. This measure is defined as
% by measuring the first relevant possibility. This measure is defined as,
\begin{eqnarray*}
    ERR_{N}=\sum_{i=1}^{N} \frac{1}{i} \prod_{j=1}^{i-1}\left(1-R_j\right) R_i,
\end{eqnarray*}
where $R_i$ indicates the relevance probability of the $i$-th document to the query and the expression $\frac{1}{i} \prod_{j=1}^{i-1}\left(1-R_j\right)$ represents the non-relevance probability of the ordered documents prior to the position of the $i$-th document in the list.

% The \textbf{Interleaving}~\citep{chapelle2012large} is a metric used to quantify the degree of user preference and summarize the outcome of an experiment. 
% When conducting comparisons with this metric, two systems' results are interleaved and exposed together to the end-users, whose clicks will be credited to the system that provides the corresponding user-clicked results. 
% The gain of the new system A over the base system B can be quantified with $\Delta_{AB}$, which is defined as 
% \begin{equation}
%     \Delta_{AB} = \frac{wins(A) + 0.5 * ties(A, B)}{wins(A) + wins(B) + ties(A,B)} - 0.5,
% \end{equation}
% where $wins(\text{A})$ counts the number of times when the results produced by system A is more preferred than system B for a given query. 
% Thus, $\Delta_{AB} > 0$ implies that system A is better than system B and vice versa. 
% % In addition, we also introduce a time-weighted version of $\Delta_{AB}\text{-tw}$, where the $wins(\text{A})$ is weighted by the post-click dwelling time. 
% % The $\Delta_{AB}\text{-tw}$ can better reflect the results' actual relevance, as users would stay longer if the results are relevant. 
% We conduct balanced interleaving experiments for comparing the proposed method with the base model.

The \textbf{Good vs. Same vs. Bad} (GSB)~\citep{10.1145/3447548.3467147} is a metric measured by the professional annotators' judgment. 
For a user-issued query, the annotators are provided with a pair (result$_1$, result$_2$) in which one result is returned by system A, and the other is generated by a competitor system B. 
The annotators, who do not know which system the result is from, are then required to independently rate among Good (result$_1$ is better), Bad (result$_2$ is better), and Same (they are equally good or bad), considering the relevance between the returned document and the given query. 
In order to quantify the human evaluation, we aggregate these three indicators mentioned above as a unified metric, denoted as $\Delta \text{GSB}$
\begin{eqnarray*}
    \Delta \text{GSB} = \frac{\# \text{Good} - \# \text{Bad}}{\# \text{Good} + \# \text{Same} +  \# \text{Bad}}.
\end{eqnarray*}

\begin{table}
\centering
\caption{Data statistics. K: thousand; M: million; B: billion.}
% \vspace{-10pt}
\scalebox{1.0}{
\begin{tabular}{l|rr}
\ChangeRT{0.8pt}
Data & $\#$Query & $\#$Query-Document Pairs \\ \hline
\textbf{log} data     & 1B & 4B  \\
\textbf{rand} data    &  6M   & 10M \\
\textbf{manual} data  & 8K & 400K \\
\ChangeRT{0.8pt}
\end{tabular}
}
\label{tab:data_statistics}
% \vspace{-5pt}
\end{table}

{\centering
\small
\tabcolsep 0.02in
\renewcommand{\arraystretch}{1.0}
\begin{table*}[h]
\centering
\caption{The relative DCG and ERR improvements over the baseline approaches.}
\label{tab_offline_result}
% \vspace{-10pt}
    \centering
    \scalebox{0.92}{
    \begin{tabular}{ c l r r r r r r r r r r}
    \ChangeRT{0.8pt}
 & & \multicolumn{5}{c}{DCG@K } & \multicolumn{5}{c}{ERR@K }  \\
    \cmidrule(lr){3-7}  
    \cmidrule(lr){8-12} 
     & & K = 2 & K = 4 & K = 6 & K = 10 & K = 20 & K = 2 & K = 4 & K = 6 & K = 10 & K = 20 \\
    \ChangeRT{0.5pt} 
\multirow{5}{*}{Cross-Encoder} 
     & No-correction (Base) & - & - & - & - & - & - & - & - & - & -\\
     & PAL                  & +9.852\%	 &+7.194\%	 &+5.583\%	 &+3.695\%	 &+0.703\%	 &+9.322\%	 &+7.317\%	 &+6.169\%	 &+5.573\%	 &+5.438\% \\  
     & TAL                  &+11.355\%	 &+8.429\%	 &+6.750\%	 &+4.784\%	 &+1.336\%	 &+11.017\%	 &+8.711\%	 &+7.468\%	 &+6.811\%	 &+6.647\%	 \\
     & IPW  &+11.872\%	 &+8.327\%	 &+6.361\%	 &+4.304\%	 &+0.744\%		 &+11.441\%	 &+8.711\%	 &+7.143\%	 &+6.811\%	 &+6.344\% \\              
    & Display Filtering & +10.144\%	 &+8.313\%	 &+7.127\%	 &+5.590\%	 &+3.124\%		 &+9.746\%	 &+8.014\%	 &+7.143\%	 &+6.502\%	 &+6.344\% \\
     & Rand             &+12.298\%	 &+10.362\%	 &+9.404\%	 &+8.256\%	 &+6.219\%	 &+11.864\%	 &+10.105\%	 &+8.766\%	 &+8.359\%	 &+7.855\% \\ \ChangeRT{0.1pt}
 \multirow{5}{*}{Mixture Model} 
     & Vanilla-Mix + No-correction &+21.342\%	 &+9.795\%	 &+3.398\%	 &-3.215\%	 &-9.396\%		 &+22.458\%	 &+15.331\%	 &+12.338\%	 &+11.146\%	 &+10.574\%	\\
     & Vanilla-Mix + Display Filtering &+24.731\%	 &+16.945\%	 &+12.687\%	 &+8.162\%	 &+2.767\%		 &+\textbf{24.576\%}	 &+19.164\%	 &+16.883\%	 &+15.480\%	 &+14.804\%	 \\
     & Vanilla-Mix + Rand  &+19.232\%	 &+15.259\%	 &+13.122\%	 &+11.128\%	 &+8.188\%	 &+18.644\%	 &+15.331\%	 &+13.636\%	 &+12.693\%	 &+12.085\% \\
     & Approximated Affine Estimator &+16.136\%	 &+11.742\%	 &+8.992\%	 &+6.430\%	 &+5.621\%		 &+16.102\%	 &+12.892\%	 &+11.039\%	 &+10.217\%	 &+9.970\%	\\ 
     & Approximated Doubly Robust & +\textbf{24.776\%}$^{\uparrow}$	 &+\textbf{19.183\%}$^\uparrow$	 &+\textbf{16.634\%}$^\uparrow$	 &+\textbf{14.146\%}$^\uparrow$	 &+\textbf{11.770\%}$^\uparrow$	 &+\textbf{24.576\%}$^\uparrow$	 &+\textbf{20.209\%}$^\uparrow$	 &+\textbf{18.182\%}$^\uparrow$	 &+\textbf{16.718\%}$^\uparrow$	 &+\textbf{16.314\%}$^\uparrow$	 \\ 
    \ChangeRT{0.8pt}
    \multicolumn{12}{c}{ ``${\uparrow}$'' indicates a statistically significant improvement ($t$-test with $p<0.05$ over the baselines).} 
    \end{tabular}
    }
\end{table*}}

\subsection{Competitor System}
According to whether involving the clicking features, we investigate two kinds of backbone models for all methods:
\textbf{(1) Cross-Encoder},
which models the query-document relevance with a 12-layer cross-encoder architecture, as mentioned in \S~\ref{sec_plm}. 
It obtains superior performance for text matching tasks~\cite{humeau2019poly}. 
\textbf{(2) Mixture Model},
which incorporates the user behavior data for document ranking. 
Particularly, we study three kinds of its variants: 
\begin{itemize}[leftmargin=*]
    \item \textbf{Vanilla-Mix}: it first concatenates the representation of cross-encoder $\bm{h}_{cls}$ and clicking feature $\bm{x}$, and then predicts the relevance with a 3-layer MLP.
    \item \textbf{Approximated Affine Estimator}: it solely relies on the clicking feature $\bm{x}$ for relevance estimation, as mentioned in \S~\ref{sec_aprroxi_affine}.
    \item \textbf{Approximated Doubly Robust}: the proposed method in this paper. 
\end{itemize}

Grounding on the aforementioned backbone models, we study the influence of different debiasing approaches:
\begin{itemize}[leftmargin=*]
    \item \textbf{No-correction}: it uses the raw click data to train the ranking model. Its performance can be considered as a lower bound for the unbiased ranking model.
    \item \textbf{PAL}: the position-bias aware learning framework~\cite{DBLP:conf/recsys/GuoYLTZ19} introduces a position model to explicitly represent position bias, and jointly models the bias and relevance. 
    \item \textbf{TAL}: we further equips the PAL with trust-bias modeling as in Equation~\ref{equ_trust_bias}. TAL is the short for \textbf{T}rust-bias \textbf{A}ware \textbf{L}earning.
    \item \textbf{IPW}: it weights the learning-to-rank loss with the propensity score mentioned in \S~\ref{sec_ipw}. The propensity score is estimated using the randomization data.
    \item \textbf{Display Filtering}: it trains the model with the query-document pairs that have been displayed on the screen, i.e., the displaying time $t>0s$.
    \item \textbf{Rand}: it trains the model with the randomization click data on the top-1 ranking position, which is described in \S~\ref{sec_plm}.
\end{itemize}

\subsection{Experimental Setting}
Regarding the backbone cross-encoder model, we use a 12-layer transformer architecture.
It is warm-initialized with a 12-layer ERNIE 2.0 provided by the Baidu Wenxin\footnote{https://wenxin.baidu.com/} toolkit. The inference speed of 12-layer ERNIE is optimized with TensorRT\footnote{https://developer.nvidia.com/tensorrt} and model quantization~\cite{gholami2021survey}, which \textbf{save more than 90\% inference time}.
The same hyper-parameters are used for various comparison models, i.e., vocabulary size of $18,000$, hidden size of $768$, feed-forward layers with dimension $1024$, and batch size of $128$.
For model optimization, we use the Adam~\citep{Kingma2015AdamAM} optimizer with a dynamic learning rate following~\citet{Vaswani2017AttentionIA}. 
We set the warm-up steps as 4000 and the maximum learning rate as $2\times 10^{-6}$ both in the pre-training and fine-tuning stage. Without particular specification, the MLP layers are implemented with the hidden size of \{64, 32, 16\}. 
All the models are trained on the distributed computing platform with $28$ Intel(R) 5117 CPU, $32G$ Memory, 8 NVIDIA V100 GPUs, and 12T Disk. 
% \vspace{-5mm}
\subsection{Offline Evaluation}
Table~\ref{tab_offline_result} summarizes different methods' relative DCG and ERR improvements over the baseline approaches. We observe that:
\begin{itemize}[leftmargin=*]
\item Our proposed method outperforms all the state-of-the-art approaches in terms of DCG and ERR, indicating that our proposed model is effective and robust to the real-world clicking behaviors.
\item The approximated affine estimator is an effective unbiased relevance estimator, which is advantageous on all metrics over the No-correction baseline. 
\item Addressing trust bias is advantageous for debiasing the clicking data since TAL is much better than PAL when handling the trust bias. It further beats the IPW on all metrics except top-2 metrics. \item The recorded historical clicking features help to improve the performance. When comparing the performance of Cross-Encoder with Mixture Model, we notice that the performance of the Mixture Model are substantially better than the corresponding method trained with the Cross-Encoder architecture. 
\item Filtering the data with displaying time $t>0s$ is beneficial for dealing with the examination bias. The baseline approach with display filtering strategy outperforms the baselines without corrections, e.g., No-correction and Vanilla-Mix. 
\item Using randomization data for model training is an effective way for correcting the bias. As shown in Table~\ref{tab_offline_result}, the performance of fine-tuning with randomization data is much better than training with uncorrected data. Furthermore, it brings more performance improvements after top-10 results since the randomized displaying increases the possibility of exposing more items beyond the top-10 results. 

\end{itemize}

\subsection{Online Experiments}
To investigate the effectiveness of the introduced techniques in the real-world commercial search engine, we deploy the proposed model to a online search system and compare its performance with the most competitive base model in the real production environment.
The online \textbf{base} model is a vanilla mixture model that conducts semantic matching between the query and document with convolution neural network and bag-of-words~(CNN+Bow-Mix) model~\cite{shen2014learning}. From the Table~\ref{online_ab_improve}, we have the following observations:
\begin{itemize}[leftmargin=*]
    \item Upgrading the semantic matching model from CNN+Bow to a cross-encoder architecture substantially improves the overall performance, especially for the tail queries. As shown in Table~\ref{online_ab_improve}, the improvements on long-tail queries regarding the DCG and GSB are +1.270\% and +3.500\%. It is quite consistent with our intuition that the PLM-based semantic matching model is more beneficial for the tail queries.
    \item The proposed approximated doubly robust estimator achieves best performance on the relevance, especially for the tail queries. We observe that the proposed approach beats the online base system by a large margin with +1.46\% and +8.5\% relative improvements on DCG@4 and GSB for long-tail queries respectively.
\end{itemize}
\begin{table}[h]
\centering
\small
\renewcommand{\arraystretch}{1}
\tabcolsep 0.02in
\caption{Performance improvements of online A/B testing.}
% \vspace{-5pt}
\scalebox{0.95}{
\begin{tabular}{l|cc|cc}
\ChangeRT{1pt}
\multirow{2}{*}{Model} & \multicolumn{2}{c|}{DCG@4} & \multicolumn{2}{c}{$\Delta \text{GSB}$} \\ 
& Random & Long-Tail & Random & Long-Tail \\\hline
CNN+Bow-Mix+Display Filtering &  -            &  -    &  -         &  -      \\ 
Vanilla-Mix+Display Filtering &  +0.420\%$^{\uparrow}$ & +1.270\%$^{\uparrow}$  &  +0.500\% &   +3.500\%$^{\uparrow}$ \\
Approximated Doubly Robust &  \textbf{+0.710}\%$^{\uparrow}$  &  \textbf{+1.460}\%$^{\uparrow}$ & \textbf{+1.600}\%$^{\uparrow}$   &  \textbf{+8.500}\%$^{\uparrow}$\\
\ChangeRT{1pt}
\multicolumn{5}{c}{``${\uparrow}$'' indicates a statistically significant improvement}\\ 
\multicolumn{5}{c}{ ($t$-test with $p<0.05$ over the baselines).}
\end{tabular}
}
\label{online_ab_improve}
\end{table}
% \vspace{-3mm}
\subsection{Effectiveness of the Trade-off Coefficient}
To investigate the impact of the learned trade-off coefficient, we further compare the performance of evaluation method~(i.e., Approximated Affine Estimator), learning method~(i.e., PLM-based Imputation Model) and its combination~(i.e., Approximated Doubly Robust) versus different search frequency in Table~\ref{tab:co_efficent}. From the table, we observe that:
\begin{itemize}[leftmargin=*]
    \item The trade-off coefficient can efficiently balance the trade-off between evaluation and learning methods. As shown in Table~\ref{tab:co_efficent}, the approximated doubly robust estimator outperforms these methods in terms of DCG@4 by a large margin. % over the baselines.
    \item The evaluation-based affine estimator is fragile when ranking documents with tail queries. Nevertheless, it exhibits superior performance for high-frequency and middle-frequency search queries. This observation is quite consistent with our analysis in \S~\ref{sec_ipw}. 
    \item The imputation model surpasses the evaluation method by a large margin~(+23.34\%) on the tail queries but underperforms on high-frequency and middle-frequency setting, which supports our analysis in \S~\ref{sec_plm}.
\end{itemize}

% we further analyze the relationship between the coefficient with the clicking number. 
% Figure~\ref{fig_coefficient} plots the value distribution of the trade-off coefficient, imputation model, and approximated affine estimator for different click counts. 
% From the figure, we observe that:
% \textbf{(1)} for most query-document pairs, the click information is too sparse or missing entirely as scatter points are gathered around the position with fewer clicks;
% \textbf{(2)} the PLM-based imputation model is capable of providing adequate relevance signals for tail clicks since the imputation score is not strongly correlated with the clicking number, demonstrating its distribution-insensitive property;
% \textbf{(3)} the trade-off coefficient is positively correlates with the number of clicks, which is consistent with the optimization loss introduced in Equation~\ref{equ_policy_gradient}.
% \begin{figure}[h]
% \setlength{\fboxrule}{0.pt}
% \setlength{\fboxsep}{0.pt}
% \fbox{
% \includegraphics[width=1.0\linewidth]{}
% }
% \caption{The value distribution with regression line of the trade-off coefficient, imputation score and approximated affine estimator for different click counts.}
% \label{fig_coefficient}
% \end{figure}

\subsection{The Effectiveness of Examination Model}
The examination model plays a vital role in the proposed doubly robust relevance estimation framework. 
We investigate the effectiveness of the designed examination model by analyzing its accuracy on a labeled test set that contains ten thousands of data instances.
Unsurprisingly, it reaches a \textbf{0.95} AUC score~\cite{huang2005using} on the test set, which reveals that the proposed examination model is able to accurately predict users' examination behaviors. 
Additionally, as shown in Figure~\ref{fig_exam_probability}, we plot \textbf{(a)} the average estimated examination probability at different positions and \textbf{(b)} the estimated examination probability on the subsequent positions below the anchor click using the devised examination model.
From the figure, we notice that the examination probability is decayed with the ranking position but has a slight boost at the last position on the first page, i.e., position 10. It is reasonable since users get used to pull-down to the last search result if their information needs are not well met by the system. Moreover, as expected, the examination probability significantly drops on the subsequent positions down from the anchor click, implying that our proposed examination model is able to effectively deduce user’s examination behaviors.
\begin{figure}[h]
\setlength{\fboxrule}{0.pt}
\setlength{\fboxsep}{0.pt}
\fbox{
\includegraphics[width=0.9\linewidth]{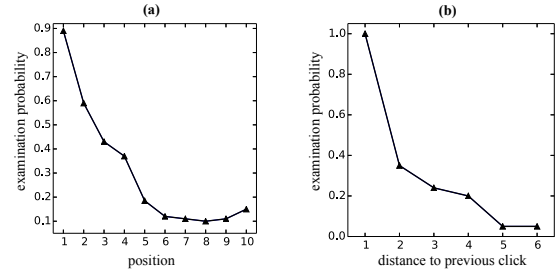}
}
% \vspace{-8pt}
\caption{
(a) The estimated average examination probability of different positions. (b) The estimated average examination probability of the subjacent ranking positions.}
\label{fig_exam_probability}
\end{figure}
% \vspace{-4mm}
\begin{table}[h]
\centering
\small
% \tabcolsep 2.0
\caption{Performance comparison of evaluation and learning methods versus different search frequency. High: \#Query Search/\#Month $\geq$ 60,000, Mid: 10 $\leq$  \#Query Search/\#Month < 60,000 ; Tail:  \#Query Search/\#Month < 10.}
% \vspace{-8pt}
\scalebox{1.0}{
\begin{tabular}{l|ccc}
\ChangeRT{1pt}
\multirow{2}{*}{Model} & \multicolumn{3}{c}{DCG@4}  \\ 
& High & Mid & Tail \\ \hline
Approximated Affine Estimator & - & - & - \\ 
PLM-based Imputation Model & -3.10\% & -1.49\% & +23.34\%$^{\uparrow}$\\ 
Approximated Doubly Robust & \textbf{+4.37}\%$^{\uparrow}$ & \textbf{+4.63}\%$^{\uparrow}$ & \textbf{27.46}\%$^{\uparrow}$ \\\ChangeRT{1pt}
\multicolumn{4}{c}{``${\uparrow}$'' indicates a statistically significant improvement}\\ 
\multicolumn{4}{c}{ ($t$-test with $p<0.05$ over the baselines).}\\ 
\end{tabular}
}
\label{tab:co_efficent}
\end{table}
% The figure shows that the trade-off coefficient positively correlates with $\log$(\#click+1). 

% \subsection{Analysis of Propensity Score}

% \subsection{Analysis}

% \paragraph{Effectiveness of Pre-trained Language Model}

% 1. Comparing with the Bow model. 

% Naive-ERNIE 1.98

% Naive-Bow 1.78

% \begin{table}[t!]
% \centering
% \small
% \tabcolsep 0.01in
% \caption{Performance comparison between bow based model (Naive-Bow) and pre-train language based model (Naive-ERNIE).}
% \scalebox{1}{
% \begin{tabular}{l|c|cc|cc|cc}
% \ChangeRT{1pt}
% \multirow{2}{*}{Model} & \multirow{2}{*}{PNR} & \multicolumn{2}{c|}{$\Delta DCG$} & \multicolumn{2}{c|}{$\Delta_{AB}$ }  & \multicolumn{2}{c}{$\Delta \text{GSB}$} \\ 
% & & $\Delta DCG_2$ & $\Delta DCG_4$ & Random & Long-Tail & Random & Long-Tail \\\hline
% Vanilla-Bow  &  1.78  &  -       &  -     &  -    &  -         &  -      \\ 
% Vanilla-ERNIE & 1.98  &  -       &  -     &  -    &  -         &  -           \\
% \ChangeRT{1pt}
% \multicolumn{7}{c}{$``*"$ indicates the statistically significant improvement}\\
% \multicolumn{7}{c}{($t$-test with $p<0.05$ over the baseline).}
% \end{tabular}
% }
% \label{online_ab_improve}
% % \vspace{-5pt}
% \end{table}

% \paragraph{Effectiveness of Examination Model}

% \paragraph{The Knowledge Learned by Trade-off Coefficient}

% Case Study. 

% Effectiveness of Policy Gradient and Tanh Activation. 

% \paragraph{Flexibility in Updating Direct Model}

% Without re-training:

% With re-training:

% Compared with Other Updating Method:

% \vspace{-5mm}
\section{Related Work}
User behavior data such as clicks has been shown to be quite promising in improving ranking models' performance~\cite{yang2022can,zou2020neural,zou2020pseudo,zou2019reinforcement,luo2022model,zhao2020whole,lin2021disentangled}. However, directly treating the click as relevance judgment might lead to misleading evaluation results or sub-optimal ranking functions due to various types of bias in clicks, e.g., position bias~\cite{craswell2008experimental}, and trust bias~\cite{10.1145/3308558.3313697}. In the past, a large amount of research has been devoted to extracting accurate and reliable relevance signals from clicking data. Depending on whether training an unbiased ranking model, existing methods can be broadly categorized into two research lines: 

\textbf{Counterfactual evaluation methods} attempt to directly extract unbiased and reliable relevance signals from biased click feedback. 
For instance, ~\citet{joachims2002optimizing} treat clicks as relevance between clicked and skipped documents. 
~\citet{richardson2007predicting} assume that a user would only click a document when the user observed it and considered it relevant and further propose an examination hypothesis to model the position bias.
Consequently, a series of click models have been proposed to model the examination probability and infer accurate relevance feedback from user clicks~\cite{chapelle2009expected,chuklin2015click,mao2018constructing,wang2013incorporating}. 
In ~\citet{craswell2008experimental}, a cascade click model, modeling user’s sequential browsing behavior on the search engine, has been proposed.
\citet{dupret2008user} propose a user browsing model that allows users to read with jumps from previous results to later results. 
Nonetheless, despite their differences, click models usually require that the same query-document pair appear multiple times for reliable relevance inference~\cite{mao2019investigating}, hindering their effectiveness for tail search queries.

\textbf{Counterfactual learning to rank} methods attempt to learn a ranking model with biased user feedback so that the resultant ranking model will converge to the same model trained with unbiased relevance labels. 
Inverse propensity weighting~(IPW), reweighting the ranking loss with an inverse propensity score, is a widely used approach. 
Early endeavors estimate the propensity score by result randomization~\cite{DBLP:conf/wsdm/JoachimsSS17,DBLP:conf/sigir/WangBMN16}. 
Recent efforts jointly model the propensity estimation and unbiased learning to rank. 
For instance, \citet{DBLP:conf/sigir/AiBLGC18} and \citet{DBLP:conf/www/HuWPL19} present a dual learning framework for estimating the click bias and training the ranking model. 
Further work extends IPW to various biases, such as trust bias~\cite{10.1145/3308558.3313697} and context bias~\cite{wu2021unbiased}. 
Derived from direct bias modeling methods, researchers also contribute a series of approaches which focus on modeling bias factors into click models and extracting true relevance from click signals by removing the bias factor in the inference stage~\cite{DBLP:conf/www/ChapelleZ09, craswell2008experimental, dupret2008user,DBLP:conf/recsys/GuoYLTZ19}. 
By using a ranking model that involves various semantic features, counterfactual learning methods is able to conduct relevance estimation, even for those queries with sparse or even missing clicks as long as the text of query and document is available. 
However, such a paradigm lacks the capability to introspectively adjust the biased relevance estimation whenever it conflicts with massive implicit user feedback. 

To make up for the deficiencies of existing relevance evaluation and learning methods, this work analyzes the properties of each fundamental relevance estimator and devises a novel way to intentionally combine their strengths, yielding a doubly robust relevance estimation framework.

\section{Conclusions}

This paper presents a novel unbiased relevance estimation method using the clickthrough data recorded by web search engine logs.
Existing works on this topic suffer from sparse or even missing click signals.
The model we propose bypasses this deficiency by introducing a PLM-based semantic imputation model and arranging it in the doubly robust relevance estimation framework. It leads to an unbiased relevance estimation approach taking the best of both counterfactual relevance estimators with user clicks and neural relevance estimators fine-tuned from pre-trained language models.
Moreover, we introduce a series of practical techniques in click behavior tracking and convenient online relevance approximation. 
The resulting model is effective and highly applicable in a production environment, which has been deployed and tested at scale in Baidu's commercial search engine.

\bibliographystyle{ACM-Reference-Format}
\appendix
\section{Appendix}
\subsection{Bias and Variance Analysis of IPW Estimator}\label{appendix_theory_1} 
\begin{theorem}
Let $\Delta_{\alpha_k}$ and $\Delta_{\beta_k}$ be the simplified notation of $(\alpha_{k} - \hat{\alpha}_{k})$ and $(\beta_{k} - \hat{\beta}_{k})$ respectively.
Then the bias and variance of the $\hat{\gamma}^{aff}_{q,d}$ estimator are
    {\small\begin{eqnarray*}
            \text{Bias}_{\mathcal{D}_{q,d}}\left|\hat{\gamma}^{aff}_{q,d}\right] &=& \frac{1}{D}\sum_{(k,c)\in\mathcal{D}_{q,d}}\left[\frac{\Delta_{\alpha_k}\gamma_{q,d} +\Delta_{\beta_k}}{\hat{\alpha}_{k}}\right], \\
    % \label{ipw_var}
            \text{Var}_{\mathcal{D}_{q,d}}\left[\hat{\gamma}^{aff}_{q,d}\right] &=& \frac{1}{D}\sum_{(k,c)\in\mathcal{D}_{q,d}} \frac{\left(\hat{\gamma}^{aff}_{q,d}\hat{\alpha}_k + \hat{\beta}_k - c\right)^2}{\hat{\alpha}^2_k}.
    \end{eqnarray*}} 
\end{theorem}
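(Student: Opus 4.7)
The plan is to handle the bias by a direct expectation calculation under the click model of Equation~\ref{equ_trust_bias}, and to handle the variance by recognizing the stated expression as the sample variance of the per-impression IPW terms $(c-\hat{\beta}_k)/\hat{\alpha}_k$, whose arithmetic mean over $\mathcal{D}_{q,d}$ is exactly $\hat{\gamma}^{aff}_{q,d}$. Throughout, I would treat the collected positions $k$ in $\mathcal{D}_{q,d}$ as fixed, so that only the clicks $c$ carry the Bernoulli randomness.

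For the bias, I would apply linearity of expectation and substitute the click probability $\mathbb{E}[c\mid k,q,d] = \alpha_k\gamma_{q,d} + \beta_k$ derived earlier to obtain
\begin{align*}
\mathbb{E}_{\mathcal{D}_{q,d}}[\hat{\gamma}^{aff}_{q,d}] \;=\; \frac{1}{D}\sum_{(k,c)\in\mathcal{D}_{q,d}} \frac{\alpha_k\gamma_{q,d} + \beta_k - \hat{\beta}_k}{\hat{\alpha}_k}.
\end{align*}
Rewriting the ground-truth relevance as $\gamma_{q,d} = \tfrac{1}{D}\sum_{(k,c)} \hat{\alpha}_k\gamma_{q,d}/\hat{\alpha}_k$ and subtracting, the numerator collapses into $(\alpha_k-\hat{\alpha}_k)\gamma_{q,d} + (\beta_k-\hat{\beta}_k) = \Delta_{\alpha_k}\gamma_{q,d} + \Delta_{\beta_k}$, which is the claimed expression.

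For the variance, I would introduce the per-impression estimator $\hat{\gamma}^{aff}_{q,d,k} := (c-\hat{\beta}_k)/\hat{\alpha}_k$, whose mean over $\mathcal{D}_{q,d}$ is $\hat{\gamma}^{aff}_{q,d}$, and write
\begin{align*}
\frac{1}{D}\sum_{(k,c)} \bigl(\hat{\gamma}^{aff}_{q,d,k} - \hat{\gamma}^{aff}_{q,d}\bigr)^2 \;=\; \frac{1}{D}\sum_{(k,c)} \left(\frac{c-\hat{\beta}_k}{\hat{\alpha}_k} - \hat{\gamma}^{aff}_{q,d}\right)^2.
\end{align*}
Factoring $1/\hat{\alpha}_k^2$ out of the square reproduces the stated formula, since $\bigl(c-\hat{\beta}_k - \hat{\alpha}_k\hat{\gamma}^{aff}_{q,d}\bigr)^2 = \bigl(\hat{\alpha}_k\hat{\gamma}^{aff}_{q,d} + \hat{\beta}_k - c\bigr)^2$.

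The main obstacle is interpretive rather than technical: the stated variance is \emph{not} the theoretical Bernoulli-click variance of $\hat{\gamma}^{aff}_{q,d}$ at fixed positions (which would scale like $1/D^2$ times a sum of per-sample $p_k(1-p_k)$ terms), but instead the empirical plug-in sample variance of the $D$ per-impression IPW contributions. Being explicit about that choice of estimator, and about conditioning on the positions while only the clicks are random, is what turns both halves of the theorem into short algebraic manipulations; without that clarification, one would otherwise be tempted to derive a completely different (and incompatible) closed form.
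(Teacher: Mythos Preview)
Your proposal is correct and follows essentially the same route as the paper: for the bias you substitute $\mathbb{E}[c]=\alpha_k\gamma_{q,d}+\beta_k$ and pull $\gamma_{q,d}$ into the sum, and for the variance you write the empirical sample variance of the per-impression terms $(c-\hat{\beta}_k)/\hat{\alpha}_k$ about their mean $\hat{\gamma}^{aff}_{q,d}$ and factor out $\hat{\alpha}_k^{-2}$, exactly as the paper does. Your closing remark that the stated ``variance'' is the plug-in sample variance rather than the Bernoulli-click variance is a clarification the paper leaves implicit, but it does not alter the derivation.
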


\begin{proof}
{\small \begin{eqnarray*}
\text{Bias}_{\mathcal{D}_{q,d}}\left[\hat{\gamma}^{aff}_{q,d}\right] &=& \mathbb{E}_{\mathcal{D}_{q,d}}\left[\frac{c - \hat{\beta}_{k}}{\hat{\alpha}_{k}}\right] - \gamma_{q,d}\\
&=& \mathbb{E}_{\mathcal{D}_{q,d}}\left[\frac{\alpha_{k}  \gamma_{q,d} + \beta_{k} - \hat{\beta}_{k}}{\hat{\alpha}_{k}}\right] - \gamma_{q,d}\\
&=& \frac{1}{D}\sum_{(k,c)\in\mathcal{D}_{q,d}}\left[\frac{\Delta_{\alpha_k}\gamma_{q,d} +\Delta_{\beta_k}}{\hat{\alpha}_{k}}\right]
\end{eqnarray*}
\begin{eqnarray*}
\text{Var}_{\mathcal{D}_{q,d}}\left[\hat{\gamma}^{aff}_{q,d}\right]
&=& \frac{1}{D}\sum_{(k,c)\in\mathcal{D}_{q,d}} \left(\hat{\gamma}^{aff}_{q,d} - \frac{c-\hat{\beta}_k}{\hat{\alpha}_k} \right)^2\\
&=& \frac{1}{D}\sum_{(k,c)\in\mathcal{D}_{q,d}} \frac{(\hat{\gamma}^{aff}_{q,d}\hat{\alpha}_k + \hat{\beta}_k - c)^2}{\hat{\alpha}^2_k} 
\end{eqnarray*}}
\end{proof}

\subsection{Bias and Variance Analysis of Doubly Robust Estimator}
\label{appendix_theory_2} 

\begin{theorem}\label{th_dr_bias_variance_appendix}
Let $\Tilde{\alpha}_k$ be the simplified notation $\hat{e}_{k}(\hat{\epsilon}^+_{k} - \hat{\epsilon}^-_{k})$, $\Delta_{\Tilde{\alpha}_{k}}$ be short of $\Tilde{\alpha}_k - \hat{\alpha}_{k}$.
The bias and variance of the $\hat{\gamma}^{dr}_{q,d}$ estimator are
{\small
\begin{eqnarray*}
\text{Bias}_{\mathcal{D}_{q,d}}\left[\hat{\gamma}^{dr}_{q,d}\right] &=&\frac{1}{D} \sum_{(k,c)\in\mathcal{D}_{q,d}}\left[ \frac{\Delta_{\alpha_{k}}\gamma_{q,d}+ \Delta_{\beta_{k}}- \Delta_{\Tilde{\alpha}_{k}}\hat{\gamma}^{imp}_{q,d}}{\hat{\alpha}_{k}}\right] \\
\text{Var}_{\mathcal{D}_{q,d}}\left[\hat{\gamma}^{dr}_{q,d}\right] &=& \frac{1}{D}\sum_{(k,c)\in\mathcal{D}_{q,d}}  \frac{(\Tilde{\alpha}_{k}\hat{\gamma}^{imp}_{q,d} + \hat{\beta}_{k} - c)^2}{\hat{\alpha}^2_{k}} + \delta_k,
\end{eqnarray*} 
}
where $\delta_k = \frac{(\hat{\gamma}^{dr}_{q,d} - \hat{\gamma}^{imp}_{q,d})(\hat{\alpha}_{k}\hat{\gamma}^{dr}_{q,d} + (2\Tilde{\alpha}_{k}- \hat{\alpha}_{k})\hat{\gamma}^{imp}_{q,d} + 2\hat{\beta}_{k} - 2c)}{\hat{\alpha}_{k}} $. 
\end{theorem}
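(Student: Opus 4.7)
The plan is to mirror the two-step derivation used for the affine estimator in Theorem~\ref{th_ips_bias_variance}, exploiting the fact that the DR estimator is an empirical average of per-sample pseudo-estimators
$\hat{\gamma}^{dr,k}_{q,d} := \hat{\gamma}^{imp}_{q,d} + (c - \hat{\beta}_k - \tilde{\alpha}_k\hat{\gamma}^{imp}_{q,d})/\hat{\alpha}_k$,
with $\tilde{\alpha}_k = \hat{e}_k(\hat{\epsilon}^+_k - \hat{\epsilon}^-_k)$. Once this form is isolated, both the bias and the empirical variance reduce to termwise algebraic identities.

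For the bias, I would invoke the trust-biased PBM identity $\mathbb{E}[c\mid k] = \alpha_k\gamma_{q,d} + \beta_k$ from Equation~\ref{equ_trust_bias} and push the expectation inside each summand. Subtracting $\gamma_{q,d}$, rewritten as $\frac{1}{D}\sum_k \gamma_{q,d}$, and distributing $\hat{\gamma}^{imp}_{q,d}$ over the sum as well, lets me combine everything per position over the common denominator $\hat{\alpha}_k$. The resulting numerator $\hat{\alpha}_k\hat{\gamma}^{imp}_{q,d} - \hat{\alpha}_k\gamma_{q,d} + \alpha_k\gamma_{q,d} + \beta_k - \hat{\beta}_k - \tilde{\alpha}_k\hat{\gamma}^{imp}_{q,d}$ regroups cleanly as $\Delta_{\alpha_k}\gamma_{q,d} + \Delta_{\beta_k} - \Delta_{\tilde{\alpha}_k}\hat{\gamma}^{imp}_{q,d}$, matching the claim. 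This reuses the algebra from the proof of Theorem~\ref{th_ips_bias_variance} with the additional imputation contribution tracked separately.

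For the variance, following the empirical-variance convention of Theorem~\ref{th_ips_bias_variance}, I set $A_k := \tilde{\alpha}_k\hat{\gamma}^{imp}_{q,d} + \hat{\beta}_k - c$ so that $\hat{\gamma}^{dr,k}_{q,d} = \hat{\gamma}^{imp}_{q,d} - A_k/\hat{\alpha}_k$, and decompose the deviation as $\hat{\gamma}^{dr,k}_{q,d} - \hat{\gamma}^{dr}_{q,d} = -(\hat{\gamma}^{dr}_{q,d} - \hat{\gamma}^{imp}_{q,d}) - A_k/\hat{\alpha}_k$. Squaring and expanding yields three pieces: the quadratic $A_k^2/\hat{\alpha}_k^2$, a constant quadratic $(\hat{\gamma}^{dr}_{q,d} - \hat{\gamma}^{imp}_{q,d})^2$, and the cross term $2A_k(\hat{\gamma}^{dr}_{q,d} - \hat{\gamma}^{imp}_{q,d})/\hat{\alpha}_k$. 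The first piece is precisely $(\tilde{\alpha}_k\hat{\gamma}^{imp}_{q,d} + \hat{\beta}_k - c)^2/\hat{\alpha}^2_k$. Combining the other two and factoring out $(\hat{\gamma}^{dr}_{q,d} - \hat{\gamma}^{imp}_{q,d})/\hat{\alpha}_k$ leaves a bracket equal to $\hat{\alpha}_k\hat{\gamma}^{dr}_{q,d} + (2\tilde{\alpha}_k - \hat{\alpha}_k)\hat{\gamma}^{imp}_{q,d} + 2\hat{\beta}_k - 2c$, which is exactly the numerator of $\delta_k$.

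The main obstacle will be the algebraic bookkeeping in the variance step: several sign flips and regroupings are needed to match the asymmetric form of $\delta_k$, and the computation only lines up if one recognizes that the ``variance'' here is the empirical sample variance over $\mathcal{D}_{q,d}$ (as in Theorem~\ref{th_ips_bias_variance}) rather than a population variance; under a population-variance convention the cross term would vanish under expectation and only the $A_k^2/\hat{\alpha}_k^2$ piece would survive. Making this convention explicit at the start of the variance derivation is what makes the decomposition into an ``affine-style'' squared-residual term plus the correction $\delta_k$ natural, and it is also what justifies the sign analysis used later to argue $\delta_k < 0$ in typical click regimes.
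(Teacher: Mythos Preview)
Your proposal is correct and follows essentially the same route as the paper's proof: for the bias you replace $\mathbb{E}[c]$ by $\alpha_k\gamma_{q,d}+\beta_k$ and regroup over $\hat{\alpha}_k$, and for the variance you write each summand as $\bigl(\hat{\alpha}_k(\hat{\gamma}^{dr}_{q,d}-\hat{\gamma}^{imp}_{q,d})+A_k\bigr)^2/\hat{\alpha}_k^2$ with $A_k=\tilde{\alpha}_k\hat{\gamma}^{imp}_{q,d}+\hat{\beta}_k-c$, then expand and factor the non-$A_k^2$ pieces into $\delta_k$. Your explicit remark that the ``variance'' here is the empirical sample variance over $\mathcal{D}_{q,d}$ (so the cross term need not vanish) is a helpful clarification that the paper leaves implicit.
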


\begin{proof}
{\small\begin{eqnarray*}
&&\text{Bias}_{\mathcal{D}_{q,d}}\left[\hat{\gamma}^{dr}_{q,d}\right] \\
&=& \mathbb{E}_{\mathcal{D}_{q,d}}\left[\frac{c-\hat{\beta}_{k} - \Tilde{\alpha}_{k}\hat{\gamma}^{imp}_{q,d}}{\hat{\alpha}_{k}}\right] + \hat{\gamma}_{q,d}^{imp} - \gamma_{q,d} \\
&=& \mathbb{E}_{\mathcal{D}_{q,d}}\left[\frac{\alpha_{k}\gamma_{q,d} + \beta_{k}-\hat{\beta}_{k} - \Tilde{\alpha}_{k}\hat{\gamma}^{imp}_{q,d}}{\hat{\alpha}_{k}}\right] + \hat{\gamma}_{q,d}^{imp} - \gamma_{q,d} \\
&=&\frac{1}{D} \sum_{(k, c) \in \mathcal{D}_{q, d}}\left[\frac{\Delta_{\alpha_{k}} \gamma_{q, d}+\Delta_{\beta_{k}}-\Delta_{\Tilde{\alpha}_{k}} \hat{\gamma}_{q, d}^{i m p}}{\hat{\alpha}_{k}}\right]
\end{eqnarray*}

\begin{eqnarray*}
&&\text{Var}_{\mathcal{D}_{q,d}}\left[\hat{\gamma}^{dr}_{q,d}\right] \\
&=& \text{Var}_{\mathcal{D}_{q,d}}\left[\frac{c-\hat{\beta}_{k} - \hat{e}_{k}(\hat{\epsilon}^+_{k} - \hat{\epsilon}^-_{k})\hat{\gamma}_{q,d}^{imp}}{\hat{\alpha}_{k}}\right] \\ 
&=& \frac{1}{D}\sum_{(k,c)\in\mathcal{D}_{q,d}} \frac{\left(\hat{\alpha}_k(\hat{\gamma}^{dr}_{q,d} - \hat{\gamma}^{imp}_{q,d}) + (\Tilde{\alpha}_{k}\hat{\gamma}^{imp}_{q,d} + \hat{\beta}_{k} - c)  \right)^2}{\hat{\alpha}^2_k}\\
&=& \frac{1}{D}\sum_{(k,c)\in\mathcal{D}_{q,d}}  \frac{(\Tilde{\alpha}_{k}\hat{\gamma}^{imp}_{q,d} + \hat{\beta}_{k} - c)^2}{\hat{\alpha}^2_{k}} + \\
&&\frac{(\hat{\gamma}^{dr}_{q,d} - \hat{\gamma}^{imp}_{q,d})(\hat{\alpha}_{k}\hat{\gamma}^{dr}_{q,d} + (2\Tilde{\alpha}_{k}- \hat{\alpha}_{k})\hat{\gamma}^{imp}_{q,d} + 2\hat{\beta}_{k} - 2c)}{\hat{\alpha}_{k}} 
\end{eqnarray*}}
\end{proof}

% \paragraph{Analysis of Doubly Robust Estimator}
% If $\mathbb{E}_{\mathcal{D}_{q,d}}[\Tilde{\alpha}_k - \alpha_k]=0 $, the bias term can be rearranged as $\text{Bias}_{\mathcal{D}_{q,d}}\left[\hat{\gamma}^{dr}_{q,d}\right]=\frac{1}{D} \sum_{(k,c)\in\mathcal{D}_{q,d}} \frac{\Delta_{\alpha_{k}}(\gamma_{q,d}-\hat{\gamma}^{imp}_{q,d})+ \Delta_{\beta_{k}}}{\hat{\alpha}_{k}}$, either $\Delta_{\alpha_{k}}=0$ or $(\gamma_{q,d}-\hat{\gamma}^{imp}_{q,d}) = 0$, the unbiasness of $\hat{\gamma}^{dr}$ is acquired with $\Delta_{\beta_k}=0$. This property is called doubly robustness. 

% Furthermore, we empirically have $\delta_k<0$ with properly specified $\hat{e}_{k}$, $\hat{\alpha}_{k}$ and $\hat{\beta}_{k}$. Taking a single displayed query-document pair as an example. If $c=0$, then $(\hat{\gamma}^{dr}_{q,d} - \hat{\gamma}^{imp}_{q,d}) <0$ and $(\hat{\alpha}_{k}\hat{\gamma}^{dr}_{q,d} + (2\Tilde{\alpha}_{k}- \hat{\alpha}_{k})\hat{\gamma}^{imp}_{q,d} + 2\hat{\beta}_{k} - 2c) = (\Tilde{\alpha}_{k}\hat{\gamma}^{imp}_{q,d} +\hat{\beta}_{k} ) > 0$. 
% Otherwise, if $c=1$, we can infer that  $(\hat{\gamma}^{dr}_{q,d} - \hat{\gamma}^{imp}_{q,d}) > 0$ and $(\hat{\alpha}_{k}\hat{\gamma}^{dr}_{q,d} + (2\Tilde{\alpha}_{k}- \hat{\alpha}_{k})\hat{\gamma}^{imp}_{q,d} + 2\hat{\beta}_{k} - 2c) = $ $(\Tilde{\alpha}_{k}\hat{\gamma}^{imp}_{q,d} +\hat{\beta}_{k} - 1)< 0$. The full derivation and analysis of Theorem~\ref{th_dr_bias_variance} is presented in Appendix~\ref{appendix_theory_2}. 
\newpage
\bibliography{reference}

\end{document}